\documentclass[journal]{IEEEtran}

\usepackage{graphicx}
\usepackage[cmex10]{amsmath}
\usepackage{cases}
\usepackage{amsthm} 
\usepackage{cite}
\usepackage{amssymb}
\usepackage{algorithm}
\usepackage{algorithmic}
\usepackage{amsmath}
\usepackage{subeqnarray}
\usepackage{cases}
\usepackage{enumerate}
\usepackage{booktabs}
\usepackage{stfloats}
\usepackage{epstopdf}
\usepackage{float}
\usepackage{esint}
\usepackage{subfigure}
\usepackage{color}
\usepackage{mathrsfs}
\usepackage[final]{changes}



\newtheorem{theorem}{\hskip\parindent\bf{Theorem}}

\ifCLASSINFOpdf
\else
\fi

\hyphenation{op-tical net-works semi-conduc-tor}
\begin{document}
	
	\title{Simultaneously Transmitting and Reflecting RIS (STAR-RIS) Assisted Multi-Antenna Covert Communications:  Analysis and Optimization}
	
	\author{Han Xiao,~\IEEEmembership{Student Member,~IEEE,} Xiaoyan Hu$^*$,~\IEEEmembership{Member,~IEEE,} \\
		Pengcheng Mu,~\IEEEmembership{Member,~IEEE,}   Wenjie Wang,~\IEEEmembership{Member,~IEEE,}  Tong-Xing Zheng,~\IEEEmembership{Member,~IEEE,} \\
		Kai-Kit~Wong,~\IEEEmembership{Fellow,~IEEE}, Kun~Yang,~\IEEEmembership{Fellow,~IEEE}

		\thanks{H. Xiao, X. Hu, P. Mu, W. Wang, and T.-X. Zheng are with the School of Information and Communication Engineering, Xi'an Jiaotong University, Xi'an 710049, China. (email: hanxiaonuli@stu.xjtu.edu.cn, xiaoyanhu@xjtu.edu.cn, \{pcmu, wjwang, zhengtx\}@mail.xjtu.edu.cn).}
		\thanks{K.-K. Wong is with the Department of Electronic and Electrical Engineering, University College London, London WC1E 7JE, U.K. (email: kai-kit.wong@ucl.ac.uk)}
		
		\thanks{K. Yang is with the School of Computer Science and Electronic Engineering, University of Essex, Colchester CO4 3SQ, U.K. (e-mail: kunyang@essex.ac.uk).}
	}

	\maketitle
	
	\begin{abstract}
		This paper investigates the multi-antenna covert communications assisted by a simultaneously transmitting and reflecting reconfigurable intelligent surface (STAR-RIS). In particular, to shelter the existence of covert communications between a multi-antenna transmitter and a single-antenna receiver from a warden, a friendly full-duplex receiver with two antennas is leveraged to make contributions where one antenna is responsible for receiving the transmitted signals and the other one transmits the jamming signals with a varying power to confuse the warden. Considering the worst case, the closed-form expression of the  minimum detection error probability (DEP) at the warden is derived and utilized in a covert constraint to guarantee the system performance. Then, we formulate an optimization problem maximizing the covert rate of the system under the covertness constraint and quality of service (QoS) constraint with communication outage analysis. To jointly design the active and passive beamforming of the transmitter and STAR-RIS, an iterative algorithm based on semi-definite relaxation (SDR) method and Dinkelbach's algorithm is proposed to effectively solve the non-convex optimization problem.
		Simulation results show that the proposed STAR-RIS-assisted scheme highly outperforms the case with conventional RIS, which validates the effectiveness of the proposed algorithm as well as the superiority of STAR-RIS in guaranteeing the covertness of wireless communications.
	\end{abstract}
	\begin{IEEEkeywords}
		Covert communication, STAR-RIS, multi-antenna, full-duplex, jamming. 
	\end{IEEEkeywords}
	
	\maketitle
	
	\vspace{-2mm}
	\section{Introduction}\label{sec:S1}
	With the advent of 5G era, people are becoming increasingly dependent on wireless communications driven by the advanced communications and data processing techniques. Massive important and sensitive information, e.g., ID information, confidential documents, etc., are transmitted over open wireless networks, which aggravates the eavesdropping risk.  Hence, people pay more and more attention to the problem of information security. Physical layer security (PLS) as a critical technology in protecting private information from eavesdropping attacks has drawn great attention in recent years \cite{A.Chorti_CSM22_Context,cui19, X.Hu_TVT17Onthe}. However, PLS techniques cannot perform well in scenarios with covertness requirements, e.g., secret military operations, since PLS is only able to protect the content information of wireless communications but is unable to hide the existence of communications \cite{zheng21, chen21}. 
	Recently, the technology of covert communications has emerged as a new security paradigm and attracted significant research interests in both civilian and military applications\cite{yan19}, which can shelter the existence of communications between transceivers  and provide a higher level of security for wireless communication systems.
	
	\vspace{-2mm}
	\subsection{Related Works}
	As a breakthrough work, \cite{bash13} first proved the fundamental limit of covert communications over additive white Gaussian noise (AWGN) channels from the perspective of information theory.
	It demonstrates that $O(\sqrt n)$ bits information can be transmitted covertly and reliably from transmitters to receivers over $n$ channel uses while the warden can achieve correct detections if the amount of transmitted information exceeds this square root law.
	Actually, this conclusion is pessimistic since the intrinsic uncertainty of wireless channels and the background noise are not taken into account in the considered communication systems of \cite{bash13}.
	For example, \cite{goeckel15} and \cite{wang18} indicate that $O(n)$ bits information can be transmitted to the receiver when  eavesdroppers don't exactly know the background noise power or the channel state information (CSI).
	Besides, existing works also resort to other uncertainties to enhance the performance of covert comunications\cite{Hu19, tao20, li20}, \cite{zheng21}. 
	In particular, a full-duplex receiver is  adopted in  \cite{Hu19} for generating power-varying artificial noise to obtain a decent covert rate. 
	In \cite{tao20}, the random transmit power is leveraged  to confuse the warden on the detection of covert transmissions.
	An uninformed jammer is introduced in \cite{li20} to assist covert communications by actively generating jamming signals  under different channel models. 
	Later in \cite{zheng21},  a multi-jammer scheme with uncoordinated jammer selection is studied to defeat the warden.
	Considering more practical scenarios, \cite{shahzad17} evaluates the influence of imperfect CSI on the system covert rate, and \cite{ma21} explores the case with multiple randomly distributed wardens and maximize the average effective covert throughput by jointly optimizing the transmit power and blocklength.
	
	The aforementioned works validate the effectiveness of the covert communication techniques from different perspectives, however, they only investigate the single-antenna covert communication scenarios. 
	In fact, multi-antenna technologies are beneficial in improving the capacity and reliability of traditional wireless communications which are also conductive to enhancing the performance of covert communications \cite{chen21, zheng19,  shahzad19}.
	Specifically, in \cite{chen21}, a multi-antenna transmitter and a full-duplex jamming receiver are utilized to alleviate the influence caused by the uncertainty of the warden.
	The authors in \cite{zheng19} study the potential performance gain of centralized and distributed multi-antenna transmitters in covert communication systems with random positions for wardens and interferers.
	Different from the above situations, a multi-antenna adversary warden is considered in \cite{shahzad19} which indicates that a slight increase in the antenna number of the adversary warden will result in a dramatical fall of covert rate.
	
	Although multi-antenna technologies can enhance covertness of communications through improving the ability of transmissions and receptions, it cannot tackle the issues brought by the randomness of wireless propagation environment.
	To break through this limitation, reconfigurable intelligent surface (RIS) has recently emerged as a promising solution \cite{wu19towards,X.HU_TCOM21RIS,lu20intelligent,zhou21intelligent,chen21enhancing,Wang21}.
	In particular, RIS is usually a two-dimensional metamaterial consisting of a large number of low-cost passive and adjustable reflecting elements. 
	The electromagnetic properties (e.g., phase and amplitude) of the signals impinged on RIS can be adaptively adjusted with the assistance of RIS elements via a smart controller. Hence, the utilization of RISs is capable of reshaping desirable wireless propagation environment,
	which has attracted intensive attentions and been leveraged in many wireless communication scenarios including covert communications \cite{lu20intelligent,zhou21intelligent,chen21enhancing,Wang21}.
	Specifically, \cite{lu20intelligent} generally summarizes the application potentials of RIS in improving covert communications. 
	Later in \cite{zhou21intelligent},  the authors explore the performance gain of covert communications provided by RIS and first prove that the perfect covertness can be achieved with the aid of RIS when the instantaneous CSI of the warden is available. A multi-input multi-output (MIMO) covert communication system assisted by RIS is applied in \cite{chen21enhancing} to resist the multi-antenna eavesdropper. 
	Also, \cite{Wang21} investigates the RIS-assisted multi-antenna covert communications by jointly optimizing the active and passive beamformers.
	
	\vspace{-3mm}
	\subsection{Motivation and Contributions}
	It is worth noting that the RISs applied by the aforementioned  works only reflect the incident signals which are limited to the scenarios that the transmitters and receivers locating at the same side of the RISs. 
	However, in practical cases, users may be on either side of RIS, and thus the flexibility and effectiveness of conventional RIS appear inadequate in these cases. To overcome this limitation, a novel technology called simultaneously transmitting and reflecting RIS (STAR-RIS) is further emerged \cite{liu21star}. 
	In particular, the incident signal will be separated into two parts when it arrives at the STAR-RIS, where one part is reflected to the same side of the incident signal and the other part is transited to the opposite side \cite{Liu22}.
	Note that STAR-RISs are capable of adjusting the reflected and transmitted signals by controlling the reflected and the transmitted coefficients simultaneously, which can help establish a more flexible full-space smart radio environment with $360^{\circ}$ coverage.
	Therefore, STAR-RIS possesses a huge application potential in wireless communications which has attracted intensive research interests from both academia and industry \cite{liu21star}. 
	However, the investigation of leveraging STAR-RISs into wireless communication systems is still in its infancy stage.
	As for secure communication systems, only a small number of state-of-the-art works have utilized STAR-RISs to enhance the system secure performance \cite{han22artificial,zhang22secrecy}.
	
	To our best knowledge, the application of STAR-RIS in covert communications has not been studied in existing works. This is the first work investigates a STAR-RIS assisted multi-antenna covert communication scenario so as to fully exploit the potentials of STAR-RIS in covert communications. 
	Our main contributions are summarized as follows:
	\vspace{-1mm}
	\begin{itemize}
		\item \emph{\textbf{STAR-RIS-assisted Covert Communication Architecture:}} A STAR-RIS-assisted covert communication architecture is constructed through which the legitimate users located on both sides of the STAR-RIS can be simultaneously served. Through elaborately design the reflected and transmitted coefficients of the STAR-RIS, this architecture can highly enhance the covert performance of the system though more flexible reconfigurations on the random wireless environment.
		\item \emph{\textbf{Closed-form Expressions for Covert System Indicators:}}
		Based on the constructed covert communication system, The closed-form expressions of the minimum detection error probability (DEP) and the corresponding optimal detection threshold at the warden are analytically derived considering the worst-case scenario.
		Based on a lower bound of the detection threshold and the large system analytic techniques, we further derive a lower bound of the average minimum DEP which is leveraged to jointly design the active and passive beamformers. The reasonability for choosing this lower bound is further validated by simulation results.
		\item 
		\emph{\textbf{Problem Formulation under Practical Constraints:}} We formulate an optimization problem aiming at maximizing the covert rate of the considered STAR-RIS-assisted covert communication system, under the covert communication constraint and the quality of service (QoS) constraint based on communication outage analysis, by jointly optimizing the active and passive beamforming at the base station (BS) and STAR-RIS.
		Due to the strongly coupled optimization variables and the characteristic amplitude constraint introduced by STAR-RIS,
		it is challenging to solve the formulated problem directly. 
		\item  \emph{\textbf{Alternating Algorithm with Guaranteed Convergence:}} An optimization algorithm based on alternating strategy is proposed to solve the formulated optimization problem in an iterative manner. In particular, the original problem is divided into three subproblems which are effectively solved by the semi-definite relaxation (SDR) method and Dinkelbach's algorithm. It is verified that the convergence of the proposed algorithm can always been guaranteed. 
		\item \emph{\textbf{Performance Improvement:}} The effectiveness of the proposed algorithm is validated by numerical results where we evaluate the average covert rate of the considered STAR-RIS-assisted system in comparison with a benchmark using the conventional RIS and a baseline algorithm called the globally convergent version of method of moving asymptotes (GCMMA). It is shown that the proposed algorithm can achieve great performance improvement compared with the baselines and the advantages are more obvious with a larger number of STAR-RIS elements.
	\end{itemize}
	
	The rest of this paper is organized as follows. In Section \ref{sec:S2}, we introduce the STAR-RIS-aided covert communication system model. The DEP of the warden and communication outage probability based on this model are derived and analyzed in Section \ref{sec:S3}. Section \ref{sec:S4} formulates the optimization problem and designs an iterative algorithm for jointly optimizing the passive and active beamforming. Section \ref{sec:S5} shows the simulation results to validate the effectiveness of the proposed algorithm. Finally, a conclusion is drawn in Section \ref{sec:S6}
	
	\textit{Notation:} Operator $\circ$ denotes the Hadamard product. $(\cdot)^T$, $(\cdot)^H$ and $(\cdot)^*$ represent transpose, conjugate transpose and conjugate, respectively. $\operatorname{Diag}(\mathbf{a})$ denotes a diagonal matrix with diagonal elements in vector $\mathbf{a}$, $\operatorname{diag}(\mathbf{A})$ denotes a vector whose elements are composed of the diagonal elements of matrix $\mathbf{A}$. $|\cdot|$ and $\|\cdot\|_2$ denote the complex modulus and the spectral norm, respectively. $\mathbb{C}^{N\times N}$ stands for the set of $N\times N$ complex matrices. $x\sim\mathcal{C N}(a,b)$ and $x~\sim\operatorname{exp}(\lambda)$ denote the circularly symmetric complex Gaussian random variable with mean $a$ and variance $b$ and the exponential random variable with mean $\lambda$, respectively. $\operatorname{Tr}(\cdot)$ represents the trace. $\mathbf{A}\succeq0$ indicates that matrix $\mathbf{A}$ is a positive semidefinite matrix. $\mathbf{I}_{N\times1}$ represents the vector with $N\times 1$ entries that are $1$.
	
	\vspace{-1mm}
	\section{System Model}\label{sec:S2}
		In this paper, we consider a STAR-RIS-assisted covert communication system model as shown in Fig.\ref{fig:System}, mainly consisting of a $M$-antenna BS transmitter (Alice) assisted by a STAR-RIS with $N$ elements, a covert user (Bob) and a  warden user (Willie) both equipped with a single antenna, and an assistant public user (Carol) with two antennas. Willie tries to detect the existence of data transmissions from Alice to Bob preparing for some security attacks.
	It is assumed that the single-antenna Bob and Willie work at the half-duplex mode,  while the two-antenna Carol operates in the full-duplex mode where one antenna receives the transmitted singles from Alice and the other one transmits jamming signals to weaken Willie's detection ability.
	Due to the existence of blockages, we assume that there is no direct links between Alice and all the users, which is reasonable in practical environment. 
	The STAR-RIS is deployed at the users' vicinity to enhance the end-to-end communications between Alice and the legal users Bob and Carol while confusing the detection of the warden user Willie.
	Without loss of generality, we consider a scenario that Bob and Carol locate on opposite sides of the STAR-RIS which can be served simultaneously by the reflected (T) and  transmitted (R) signals via STAR-RIS, respectively.\footnote{Similar to \cite{wu19towards} and  \cite{Wang21}, we ignore the signals reflected or transmitted more than once by the STAR-RIS considering  the severe path losses.}
	The energy splitting protocol is adopted for STAR-RIS whose all elements can work at T\&R modes simultaneously \cite{Liu22}.

	The wireless communication channels from Alice to STAR-RIS, and from STAR-RIS to Bob, Carol, Willie are denoted as \textbf{H$_{\rm{AR}}$}=$\sqrt{\textit{l$_{\rm{AR}}$}}\textbf{G$_{\rm{AR}}$}$$\in\mathbb{C^{\textit{N$\times M$} }}$ and \textbf{h}$_{\rm{rb}}$=$\sqrt{\textit{l$_{\rm{rb}}$}}\textbf{g$_{\rm{rb}}$}$$\in\mathbb{C^{\textit{N$\times 1$}}}$, \textbf{h}$_{\rm{rc}}$=$\sqrt{\textit{l$_{\rm{rc}}$}}\textbf{g$_{\rm{rc}}$}$$\in\mathbb{C^{\textit{N$\times 1$} }}$, \textbf{h}$_{\rm{rw}}$=$\sqrt{\textit{l$_{\rm{rw}}$}}\textbf{g$_{\rm{rw}}$}$$\in\mathbb{C^{\textit{N$\times 1$} }}$, respectively.
	In particular, \textbf{G$_{\rm{AR}}$} and \textbf{g$_{\rm{rb}}$}, \textbf{g$_{\rm{rc}}$}, \textbf{g$_{\rm{rw}}$} are the small-scale Rayleigh fading coefficients whose entries are independent identically distributed (i.i.d.) following the complex Gaussian distribution with zero mean and unit variance.
	In addition, \textit{l$_{\rm{AR}}$} and \textit{l$_{\rm{rb}}$}, \textit{l$_{\rm{rc}}$}, \textit{l$_{\rm{rw}}$} are the large-scale path loss coefficients modeled as $\sqrt{\frac{\rho_0}{d^\alpha}}$, where $\rho_0$ denotes the reference power gain at a distance of one meter ($m$), $\alpha$ represents the path-loss exponent, and $d$ corresponds to the node distances of \textit{d$_{\rm{AR}}$} and \textit{d$_{\rm{rb}}$}, \textit{d$_{\rm{rc}}$}, \textit{d$_{\rm{rw}}$}.
	We assume that the considered STAR-RIS-assisted covert communication system operates in the time division duplex (TDD) mode, so that the uplink channel estimation techniques based on STAR-RIS can be exploited to estimate the aforementioned communication CSI  by utilizing the channel reciprocity \cite{wu21}.
	As for the full-duplex  assistant user Carol, its self-interference channel can be modeled as $h_{\mathrm{cc}}=\sqrt{\phi}g_{\mathrm{cc}}$, where $g_{\mathrm{cc}}\sim \mathcal{C N}(0, 1)$, $\phi \in[0,1]$ is the self-interference cancellation (SIC) coefficient determined by the performing efficiency of the SIC \cite{Kim15, Hu19}.
		\begin{figure}[ht]
		\centering
		\includegraphics[scale=0.4]{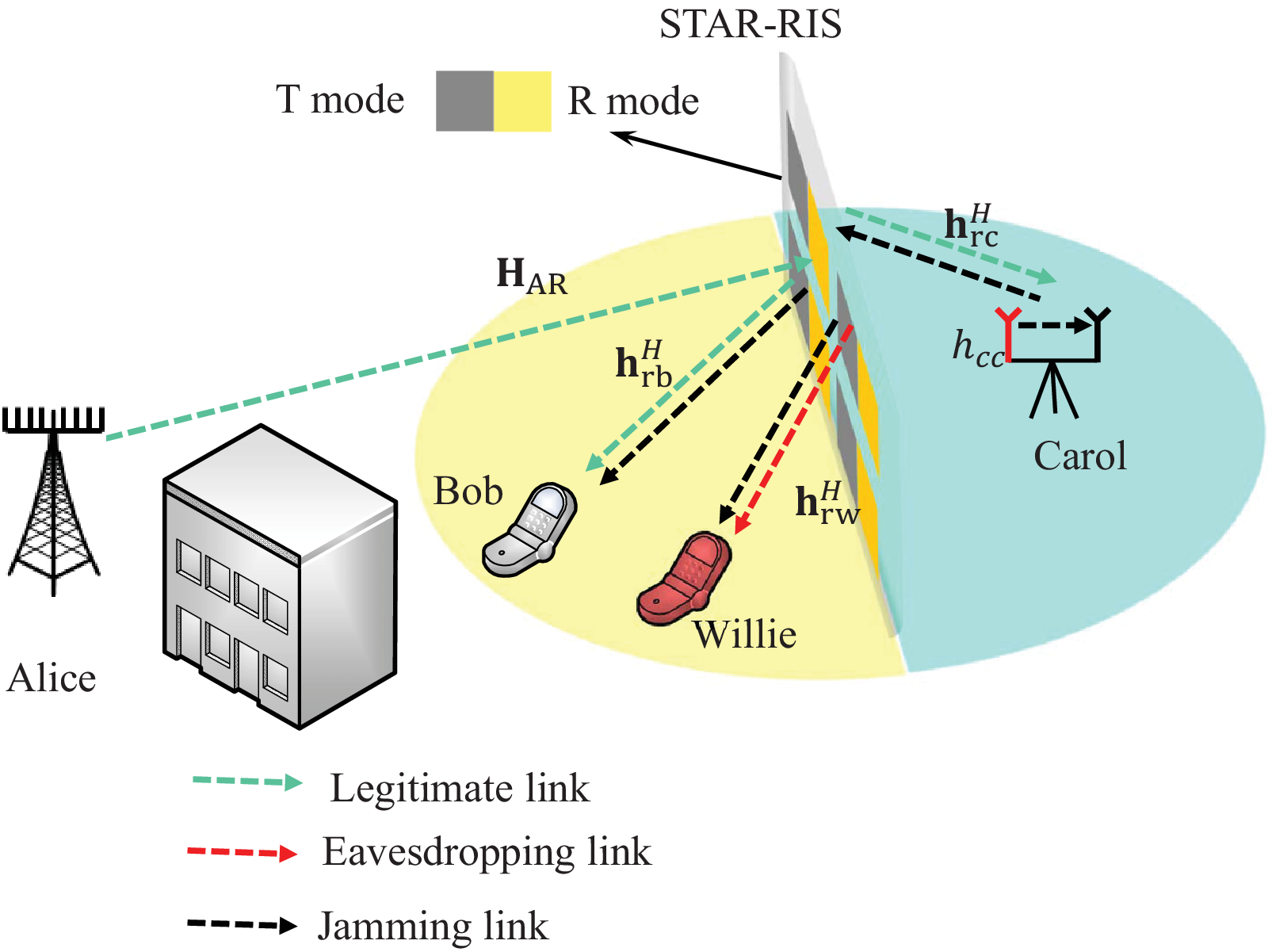}
		\caption{System model for STAR-RIS-assisted covert communications.}\label{fig:System}
	\end{figure}

	In this paper, we assume that the instantaneous CSI between STAR-RIS and Alice, Bob, Carol (i.e., $\mathbf{H}_{\mathrm{AR}}$, $\mathbf{h}_{\mathrm{rb}}$,  $\mathbf{h}_{\mathrm{rc}}$) is available at Alice, while only the statistical CSI between STAR-RIS and the Willie ($\mathbf{h}_{\mathrm{rw}}$) is known at Alice. In contrast, it is assumed that Willie is capable to know the instantaneous CSI of all the users, i.e., $\mathbf{h}_{\mathrm{rw}}$, $\mathbf{h}_{\mathrm{rb}}$ and $\mathbf{h}_{\mathrm{rc}}$, but can only access the statistical CSI of Alice, i.e., $\mathbf{H}_{\mathrm{AR}}$.
	As for the jamming signals transmitted by Carol, we assume that the power of the jamming signals, denoted as $P_\mathrm{j}$, follows the uniform distribution 
	with $P_{\mathrm{j}}^{\max }$ being the maximum power limit \cite{Hu19,Wang21}.
	It is assumed that Willie can only obtain the distribution of the jamming power, 
	and thus it is difficult for Willie to detect the existence of communications between Alice and Bob under the random jamming interference. 
	
	Under such a covert strategy, 
	the received signals at Bob and Carol in the considered STAR-RIS-assisted covert communication system can be respectively expressed as
	\begin{align}
		y_\mathrm{b}[k]=&\mathbf{h}_{\mathrm{rb}}^H \boldsymbol{\Theta}_\mathrm{r} \mathbf{H}_{\mathrm{AR}}\left(\mathbf{w}_\mathrm{b} s_\mathrm{b}[k]+\mathbf{w}_\mathrm{c}s_\mathrm{c}[k]\right)\notag\\
		&+\mathbf{h}_{\mathrm{rb}}^H \boldsymbol{\Theta}_\mathrm{t} \mathbf{h}_{\mathrm{rc}}^* \sqrt{P_\mathrm{j}} s_\mathrm{j}[k]+n_\mathrm{b}[k],\label{eq_rec_b}\\
		y_\mathrm{c}[k]=&\mathbf{h}_{\mathrm{rc}}^H \boldsymbol{\Theta}_\mathrm{t} \mathbf{H}_{\mathrm{AR}}\left(\mathbf{w}_\mathrm{b} s_\mathrm{b}[k]+\mathbf{w}_\mathrm{c}s_\mathrm{c}[k]\right)\notag\\
		&+h_{\mathrm{cc}} \sqrt{P_\mathrm{j}} s_\mathrm{j}[k]+n_\mathrm{c}[k],\label{eq_rec_c}
	\end{align}
	where $k \in \mathcal{K}\triangleq\{1, \ldots, K\}$ denotes the index of each communication channel use with the maximum number of $K$ in a time \mbox{slot}.
	$\boldsymbol{\Theta}_\mathrm{r}=\operatorname{Diag}\Big\{\sqrt{\beta_\mathrm{r}^1} e^{\mathrm{j} \phi_\mathrm{r}^1}, \ldots, \sqrt{\beta_\mathrm{r}^N}e^{\mathrm{j}\phi_\mathrm{r}^N}\Big\}$ and
	$\boldsymbol{\Theta}_\mathrm{t}=\operatorname{Diag}\Big\{\sqrt{\beta_\mathrm{t}^1} e^{\mathrm{j} \phi_\mathrm{t}^1}, \ldots, \sqrt{\beta_\mathrm{t}^N}e^{\mathrm{j}\phi_\mathrm{t}^N}\Big\}$
	respectively indicate the STAR-RIS reflected and transmitted coefficient matrices, where $\beta_\mathrm{r}^n, \beta_\mathrm{t}^n\in[0,1]$, $\beta_\mathrm{r}^n+\beta_\mathrm{t}^n=1$ and $\phi_\mathrm{r}^n, \phi_\mathrm{t}^n\in[0,2\pi)$, for $\forall n \in \mathcal{N} \triangleq\{1,2, \ldots, N\}$.
	In addition, {$\mathbf{w}_\mathrm{b}\in\mathbb{C^{\textit{M$\times 1$} }}$ and $ \mathbf{w}_\mathrm{c}\in\mathbb{C^{\textit{M$\times 1$} }}$ are the precoding vectors at Alice for Bob and Carol, respectively.
		$s_\mathrm{b}[k]$ and $s_\mathrm{c}[k]$ $\sim \mathcal{C N}(0, 1)$ are the signals transmitted by Alice to Bob and Carol while $s_\mathrm{j}[k]$$\sim \mathcal{C N}(0, 1)$ is the jamming signal transmitted by Carol, where we try to hide the transmission of $s_\mathrm{b}[k]$ from the detection of Willie.
		Also, we use $n_\mathrm{b}[k]\sim\mathcal{C N}(0, \sigma_\mathrm{b}^2)$ and $n_\mathrm{c}[k]\sim\mathcal{C N}(0, \sigma_\mathrm{c}^2)$ to represent the AWGN noise received at Bob and Carol with $\sigma_\mathrm{b}^2$ and $\sigma_\mathrm{c}^2$ being the corresponding noise power.\footnote{It is worth noting that we ignore the jamming signals reflected by STAR-RIS at Carol mainly due to the fact that it is negligible compared with the self-interference  jamming signals received by Carol.}

	Note that the instantaneous jamming power $P_\mathrm{j}$ and the self-interference channel $h_{\mathrm{cc}}$ of Carol are unavailable at Alice, thus the randomness introduced by $P_\mathrm{j}$ and $h_{\mathrm{cc}}$ is possible to incur communication outage. When the required communication {rate from Alice to Bob ($R_\mathrm{b}$) or Carol ($R_\mathrm{c}$) exceeds the corresponding channel capacity ($C_\mathrm{b}$, $C_\mathrm{c}$), 
		the communication outage occurs.
		Hence, the communication outage probability at Bob and Carol can be respectively expressed as
		\vspace{-1mm}
		\begin{align}
			\delta_\mathrm{AB}=&\operatorname{Pr}\left(C_\mathrm{b}<R_\mathrm{b}\right),\label{eq_outage_b}\\
			\delta_\mathrm{AC}=&\operatorname{Pr}\left(C_\mathrm{c}<R_\mathrm{c}\right). \label{eq_outage_c}
		\end{align}
		
		\vspace{-2mm}To guarantee the communication quality between Alice and Bob/Carol, the issue of communication outage also needs to be addressed in the considered covert communication system.
		
		\section{Analysis on STAR-RIS-Assisted Covert Communications}\label{sec:S3}%
		\subsection{Covert Communication Detection Strategy at Willie}\label{sec:S3_P1}
		In this section, we detail the detection strategy of Willie for STAR-RIS-assisted covert communications from Alice to Bob. In particular, Willie attempts to judge whether there exists covert transmissions based on the received signal sequence $\{y_\mathrm{w}[k]\}_{k \in \mathcal{K}}$ in a time slot.
		Thus, Willie has to face a binary hypothesis for detection, which includes a null hypothesis, $\mathcal{H}_0$, representing that Alice only transmits public signals to Carol, and an alternative hypothesis, $\mathcal{H}_1$, indicating that Alice transmits both public signals and covert signals to Coral and Bob, respectively.
		Furthermore, the received signals at Willie based on the two hypotheses are given by
		\begin{align}
			\mathcal{H}_0: y_\mathrm{w}[k]=&\mathbf{h}_{\mathrm{rw}}^H \boldsymbol{\Theta}_\mathrm{r} \mathbf{H}_{\mathrm{AR}} \mathbf{w}_\mathrm{c}s_\mathrm{c}[k]+\mathbf{h}_{\mathrm{rb}}^H \boldsymbol{\Theta}_\mathrm{t} \mathbf{h}_{\mathrm{rc}}^* \sqrt{P_\mathrm{j}} s_\mathrm{j}[k]\notag\\&+n_\mathrm{w}[k], ~k\in \mathcal{K}, \label{eq_hypoH0_w}\\
			\mathcal{H}_1: y_\mathrm{w}[k]=&\mathbf{h}_{\mathrm{rw}}^H \boldsymbol{\Theta}_\mathrm{r} \mathbf{H}_{\mathrm{AR}} \mathbf{w}_\mathrm{b} s_\mathrm{b}[k]+\mathbf{h}_{\mathrm{rw}}^H \boldsymbol{\Theta}_\mathrm{r} \mathbf{H}_{\mathrm{AR}} \mathbf{w}_\mathrm{c} s_\mathrm{c}[k]\notag\\&+\mathbf{h}_{\mathrm{rb}}^H \boldsymbol{\Theta}_\mathrm{t} \mathbf{h}_{\mathrm{rc}}^* \sqrt{P_\mathrm{j}}s_\mathrm{j}[k]+n_\mathrm{w}[k], ~k\in \mathcal{K},\label{eq_hypoH1_w}
		\end{align}
		where $n_\mathrm{w}[k]\sim\mathcal{C N}(0, \sigma_\mathrm{w}^2)$ is the AWGN received at Willie. 
		We assume that Willie utilizes a radiometer to detect the  covert signals from Alice to Bob, owing to its properties of low
		complexity and ease of implementation \cite{zhou19, tao20}.
			\begin{figure*}[t]
			\hrulefill
			\vspace*{4pt}\\
			\begin{equation}\label{eq_repower_w}
				\overline{P}_\mathrm{w}=\lim _{K \rightarrow+\infty} \frac{1}{K} \sum_{k=1}^K\left|y_\mathrm{w}[k]\right|^2=\begin{cases}\left|\mathbf{h}_{\mathrm{rw}}^H \boldsymbol{\Theta}_\mathrm{r} \mathbf{H}_{\mathrm{AR}} \mathbf{w}_\mathrm{c}\right|^2+\left|\mathbf{h}_{\mathrm{rw}}^H \boldsymbol{\Theta}_\mathrm{t} \mathbf{h}_{\mathrm{rc}}^*\right|^2 P_\mathrm{j}+\sigma_\mathrm{w}^2, & \mathcal{H}_0, \\ \left|\mathbf{h}_{\mathrm{rw}}^H \boldsymbol{\Theta}_\mathrm{r} \mathbf{H}_{\mathrm{AR}} \mathbf{w}_\mathrm{b}\right|^2+\left|\mathbf{h}_{\mathrm{rw}}^H \boldsymbol{\Theta}_\mathrm{r} \mathbf{H}_{\mathrm{AR}} \mathbf{w}_\mathrm{c}\right|^2+\left|\mathbf{h}_{\mathrm{rw}}^H \boldsymbol{\Theta}_\mathrm{t} \mathbf{h}_{\mathrm{rc}}^*\right|^2 P_\mathrm{j}+\sigma_\mathrm{w}^2, & \mathcal{H}_1,\end{cases}
			\end{equation}
		\end{figure*}
	
		According to the working mechanism of the radiometer, the average power of the received signals at Willie in a time slot, i.e., $\overline{P}_\mathrm{w}=\frac{1}{K} \sum_{k=1}^K\left|y_\mathrm{w}[k]\right|^2$, is employed for statistical test. Similar to the existing works, (e.g., \cite{Hu19,Wang21, zheng21}), we assume that Willie uses infinite number of signal samples to implement binary detection, i.e., $K \rightarrow \infty$.
		Hence, the average received power at Willie $\overline{P}_\mathrm{w}$ can be asymptotically approximated as (\ref{eq_repower_w}) which is shown at the top of the next page.
		Willie needs to analyze $\overline{P}_\mathrm{w}$ to decide whether the communication between Alice and Bod is under the  hypotheses of $\mathcal{H}_0$ or $ \mathcal{H}_1$, and its decision rule can be presented as
		\begin{equation}\label{eq_deci_w}
			\overline{P}_\mathrm{w} \underset{\mathcal{D}_0}{\stackrel{\mathcal{D}_1}{\gtrless}} \tau_\mathrm{dt},
		\end{equation}
		where $\mathcal{D}_0$ (or $\mathcal{D}_1$) indicates the decision that Willie favors $\mathcal{H}_0$ (or $\mathcal{H}_1$), and $\tau_\mathrm{d t}>0$ is the corresponding detection threshold.
	
		In this paper, we adopt the DEP as Willie's detection performance metric and consider the worst case scenario that Willie can optimize its detection threshold to obtain the minimum DEP.
		According to the Neyman-Pearson criterion, the minimum DEP of Willie is the likelihood ratio \cite{Wang21, li20}, which can be expressed as 
		\begin{align}\label{eq_likelihood_w}
			\Lambda\left(\mathbf{y}_\mathrm{w}\right)=\frac{f_{\mathbf{y}_\mathrm{w} \mid \mathcal{H}_1}\left(\mathbf{y}_\mathrm{w} \mid \mathcal{H}_1\right)}{f_{\mathbf{y}_\mathrm{w} \mid \mathcal{H}_0}\left(\mathbf{y}_\mathrm{w} \mid \mathcal{H}_0\right)},
		\end{align}
		where $\mathbf{y}_\mathrm{w}=\left\{y_\mathrm{w}[1], \ldots, y_\mathrm{w}[K]\right\}$ is Willie's received signal vector, $f_{\mathbf{y}_\mathrm{w} \mid \mathcal{H}_0}$ and $f_{\mathbf{y}_\mathrm{w} \mid \mathcal{H}_1}$ are the probability density functions (PDFs) of the sampling signals when $\mathcal{H}_0$ and $\mathcal{H}_1$ are true, respectively.
		However, it is difficult to derive the likelihood ratio due to the fact that the instantaneous CSI of $ \mathbf{H}_{\mathrm{AR}} $ is unavailable at Willie, which introduces extra randomness in received signal $\mathbf{y}_\mathrm{w}$.
		
		To solve this problem, we will derive the false alarm (FA) probability and the miss detection (MD) probability to further obtain the minimum DEP.
		Specifically, FA probability represents the probability of Willie making the decision $\mathcal{D}_1$ under $\mathcal{H}_0$, i.e., $P_{\mathrm{FA}}=\operatorname{Pr}\left(\mathcal{D}_0 \mid \mathcal{H}_1\right)$, while MD probability is the probability of Willie making the decision $\mathcal{D}_0$ under $\mathcal{H}_1$, i.e.,    $P_{\mathrm{MD}}=\operatorname{Pr}\left(\mathcal{D}_1 \mid \mathcal{H}_0\right)$. Hence, the DEP of Willie can be written as
		\begin{equation}\label{eq_DEP}
			P_\mathrm{e}=P_{\mathrm{FA}}+P_{\mathrm{MD}}=\operatorname{Pr}\left(\mathcal{D}_0 \mid \mathcal{H}_1\right)+\operatorname{Pr}\left(\mathcal{D}_1 \mid \mathcal{H}_0\right).
		\end{equation}
		
		It is easy to note that $0 \leq P_\mathrm{e} \leq 1$ where $P_\mathrm{e}=0$ indicates that Willie can always correctly detect the existence of the cover communications between Alice and Bob, while $P_\mathrm{e}=1$ means that Willie is never able to make correct detections.
		By choosing a reasonable detection threshold $\tau_\mathrm{d t}$, the minimum DEP denoted as $P_\mathrm{e}^*$, can be obtained at Willie. Also, in order to achieve the covertness of communications, it is necessary to guarantee $P_\mathrm{e}^* \geq 1-\epsilon$, where $\epsilon \in(0,1)$ is a quite small value required by the system performance indicators.

		\subsection{Analysis on Detection Error Probability}\label{sec:S3_P2}
		In this section, we first derive the analytical expressions for $P_{\mathrm{FA}}$ and $P_{\mathrm{MD}}$ in closed form, based on the distribution of $\overline{P}_\mathrm{w}$ under $\mathcal{H}_0$ and $\mathcal{H}_1$.
		Then, the optimal detection threshold $\tau_\mathrm{d t}^*$ and the minimum DEP $P_\mathrm{e}^*$ are obtained by analyzing the analytical expression of DEP.
		Considering the fact that the instantaneous CSI of channel $\mathbf{h}_\mathrm{rw}$ is not available at Alice, thus we derive a lower bound for  the average the minimum DEP over $\mathbf{h}_{\mathrm{rw}}$. In particular, the analytical expressions for $P_{\mathrm{FA}}$ and $P_{\mathrm{MD}}$ can be obtained through Theorem \ref{th1}.
		\begin{theorem}\label{th1}
			The analytical expressions for FA probability $P_\mathrm{FA}$ and MD probability $P_\mathrm{MD}$ are respectively given as
			\begin{equation}\label{eq_P_FA}
				\begin{aligned}
					&P_\mathrm{FA}=\\ &\begin{cases}1, & \tau_\mathrm{d t}<\sigma_\mathrm{w}^2, \\
						1-\frac{\left(\tau_\mathrm{d t}-\sigma_\mathrm{w}^2\right)+\lambda e^{-\frac{\tau_\mathrm{d t}-\sigma_\mathrm{w}^2}{\lambda}}-\lambda}{\gamma P_\mathrm{j}^{\max} }, &\sigma_\mathrm{w}^2 \leq \tau_\mathrm{d t}<\sigma_\mathrm{w}^2+\gamma P_\mathrm{j}^{\max} , \\
						\frac{e^{-\frac{\tau_\mathrm{d t}-\sigma_\mathrm{w}^2}{\lambda}}\left(e^{\frac{\gamma P_\mathrm{j}^{\max}}{\lambda}}-1\right) \lambda}{\gamma P_\mathrm{j}^{\max}}, &\tau_\mathrm{d t} \geq  \sigma_\mathrm{w}^2+\gamma P_\mathrm{j}^{\max}, \end{cases}
				\end{aligned}
			\end{equation}
			\begin{equation}\label{eq_P_MD}
				\begin{aligned}
					&P_\mathrm{MD}=\\&\begin{cases}
						0, & \tau_\mathrm{d t}<\sigma_\mathrm{w}^2, \\
						\frac{\left(\tau_\mathrm{d t}-\sigma_\mathrm{w}^2\right)+\tilde{\lambda} e^{-\frac{\tau_\mathrm{d t}-\sigma_\mathrm{w}^2}{\tilde{\lambda}}}-\tilde{\lambda}}{\gamma P_\mathrm{j}^{\max}},& \sigma_\mathrm{w}^2 \leq \tau_\mathrm{d t}<\sigma_\mathrm{w}^2+\gamma P_\mathrm{j}^{\max}, \\
						1-\frac{\mathrm{e}^{-\frac{\tau_\mathrm{d t}-\sigma_\mathrm{w}^2}{\tilde{\lambda}}}\left(e^{\frac{\gamma P_\mathrm{j}^{\max}}{\tilde{\lambda}}}-1\right) \tilde{\lambda}}{\gamma P_\mathrm{j}^{\max}}, &\tau_\mathrm{d t}\geq \gamma P_\mathrm{j}^{\max} +\sigma_\mathrm{w}^2,  \end{cases}	
				\end{aligned}
			\end{equation}
			which are shown in closed form with $\lambda=\left\|\mathbf{h}_{\mathrm{rw}}^H \boldsymbol{\Theta}_\mathrm{r}\right\|_2^2\mathbf{w}_\mathrm{c}^H \mathbf{w}_\mathrm{c}$,
			$\tilde{\lambda}=\left\|\mathbf{h}_{\mathrm{rw}}^H \boldsymbol{\Theta}_\mathrm{r}\right\|_2^2\big(\mathbf{w}_\mathrm{b}^H\mathbf{w}_\mathrm{b}+ $ $\mathbf{w}_\mathrm{c}^H \mathbf{w}_\mathrm{c}\big)$, and $\gamma=\left|\mathbf{h}_{\mathrm{rw}}^H \boldsymbol{\Theta}_\mathrm{t} \mathbf{h}_{\mathrm{rc}}^*\right|^2$.
		\end{theorem}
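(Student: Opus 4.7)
The plan is to compute $P_{\mathrm{FA}}=\Pr(\overline{P}_{\mathrm{w}}\geq\tau_{\mathrm{dt}}\mid\mathcal{H}_0)$ and $P_{\mathrm{MD}}=\Pr(\overline{P}_{\mathrm{w}}<\tau_{\mathrm{dt}}\mid\mathcal{H}_1)$ by first identifying the distributions of the random terms in $\overline{P}_{\mathrm{w}}$ of \eqref{eq_repower_w}, and then carrying out a piecewise tail-integral calculation. From Willie's viewpoint, $\mathbf{h}_{\mathrm{rw}}$, $\mathbf{h}_{\mathrm{rc}}$, $\boldsymbol{\Theta}_{\mathrm{r}}$, and $\boldsymbol{\Theta}_{\mathrm{t}}$ are deterministic, so both $\mathbf{v}\triangleq\boldsymbol{\Theta}_{\mathrm{r}}^H\mathbf{h}_{\mathrm{rw}}$ and $\gamma=|\mathbf{h}_{\mathrm{rw}}^H\boldsymbol{\Theta}_{\mathrm{t}}\mathbf{h}_{\mathrm{rc}}^*|^2$ are fixed; the residual randomness lives in the unknown Rayleigh matrix $\mathbf{H}_{\mathrm{AR}}$ and the jamming power $P_{\mathrm{j}}\sim U[0,P_{\mathrm{j}}^{\max}]$, which are independent.

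First I would establish that, for any deterministic $\mathbf{v}$ and $\mathbf{w}$, the scalar $\mathbf{v}^H\mathbf{H}_{\mathrm{AR}}\mathbf{w}$ is zero-mean circularly symmetric complex Gaussian with variance $\|\mathbf{v}\|_2^2\,\mathbf{w}^H\mathbf{w}$, so its squared magnitude is exponentially distributed with that same mean. This immediately yields that under $\mathcal{H}_0$ the signal term $|\mathbf{h}_{\mathrm{rw}}^H\boldsymbol{\Theta}_{\mathrm{r}}\mathbf{H}_{\mathrm{AR}}\mathbf{w}_{\mathrm{c}}|^2$ is exponential with mean $\lambda$, and under $\mathcal{H}_1$ the combined signal power $|\mathbf{h}_{\mathrm{rw}}^H\boldsymbol{\Theta}_{\mathrm{r}}\mathbf{H}_{\mathrm{AR}}\mathbf{w}_{\mathrm{b}}|^2+|\mathbf{h}_{\mathrm{rw}}^H\boldsymbol{\Theta}_{\mathrm{r}}\mathbf{H}_{\mathrm{AR}}\mathbf{w}_{\mathrm{c}}|^2$ is treated as exponential with mean $\tilde{\lambda}$, with $\lambda$ and $\tilde{\lambda}$ as defined in the theorem statement. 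The jamming term $B\triangleq\gamma P_{\mathrm{j}}$ is then uniformly distributed on $[0,\gamma P_{\mathrm{j}}^{\max}]$ and independent of the signal term.

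With the distributional structure pinned down, $P_{\mathrm{FA}}$ reduces to computing $\Pr(A+B\geq t)$ for $A\sim\mathrm{Exp}(1/\lambda)$, $B\sim U[0,\gamma P_{\mathrm{j}}^{\max}]$ independent, where $t\triangleq\tau_{\mathrm{dt}}-\sigma_{\mathrm{w}}^2$. Conditioning on $B=b$ gives $\Pr(A\geq t-b)=e^{-(t-b)/\lambda}$ whenever $t-b\geq 0$ and $1$ otherwise, and I would split into three regimes: (i) $t<0$, where the support of $A+B$ lies entirely above $t$ and the probability is $1$; (ii) $0\leq t\leq\gamma P_{\mathrm{j}}^{\max}$, where the integral over $b$ splits at $b=t$, producing after elementary integration the middle piecewise branch $1-[t+\lambda e^{-t/\lambda}-\lambda]/(\gamma P_{\mathrm{j}}^{\max})$; and (iii) $t>\gamma P_{\mathrm{j}}^{\max}$, where $\Pr(A\geq t-b)$ is a pure exponential over the whole support of $B$ and yields $\lambda e^{-t/\lambda}(e^{\gamma P_{\mathrm{j}}^{\max}/\lambda}-1)/(\gamma P_{\mathrm{j}}^{\max})$. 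The formula for $P_{\mathrm{MD}}=1-\Pr(\tilde{A}+B\geq t)$ then follows by the same route with $\lambda\mapsto\tilde{\lambda}$ and the complementation swapping which of the trivial branches equals $0$ versus $1$.

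The main obstacle is justifying the exponential distribution for the $\mathcal{H}_1$ combined power $\tilde{A}=\mathbf{v}^H\mathbf{H}_{\mathrm{AR}}(\mathbf{w}_{\mathrm{b}}\mathbf{w}_{\mathrm{b}}^H+\mathbf{w}_{\mathrm{c}}\mathbf{w}_{\mathrm{c}}^H)\mathbf{H}_{\mathrm{AR}}^H\mathbf{v}$: this is a quadratic form in a complex Gaussian with a rank-two weight matrix, and is in general a hypoexponential (a weighted sum of two independent exponentials whose rates are the two nonzero eigenvalues of $\mathbf{w}_{\mathrm{b}}\mathbf{w}_{\mathrm{b}}^H+\mathbf{w}_{\mathrm{c}}\mathbf{w}_{\mathrm{c}}^H$) rather than a single exponential unless $\mathbf{w}_{\mathrm{b}}$ and $\mathbf{w}_{\mathrm{c}}$ are collinear. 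I expect the proof to close this gap either via a first-moment matching that replaces $\tilde{A}$ by an exponential with the same mean $\tilde{\lambda}$, via the large-system-analytic technique flagged in the contributions list, or by recasting $\mathbf{w}_{\mathrm{b}}s_{\mathrm{b}}+\mathbf{w}_{\mathrm{c}}s_{\mathrm{c}}$ as an effective single-stream with an equivalent scalar gain whose magnitude squared is exactly exponential. Everything else reduces to routine calculus, with the only care needed being to verify continuity of the three piecewise branches at the breakpoints $\tau_{\mathrm{dt}}=\sigma_{\mathrm{w}}^2$ and $\tau_{\mathrm{dt}}=\sigma_{\mathrm{w}}^2+\gamma P_{\mathrm{j}}^{\max}$.
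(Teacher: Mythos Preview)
Your approach coincides with the paper's. In Appendix~A the authors set $\varrho_1\triangleq\mathbf{h}_{\mathrm{rw}}^H\boldsymbol{\Theta}_{\mathrm{r}}\mathbf{H}_{\mathrm{AR}}\mathbf{w}_{\mathrm{c}}$, observe that (conditioned on $\mathbf{h}_{\mathrm{rw}}$, $\boldsymbol{\Theta}_{\mathrm{r}}$, $\mathbf{w}_{\mathrm{c}}$) $\varrho_1\sim\mathcal{CN}(0,\lambda)$ so $|\varrho_1|^2\sim\mathrm{Exp}(1/\lambda)$, and evaluate $P_{\mathrm{FA}}$ by the same piecewise double integral over the exponential and the uniform $P_{\mathrm{j}}$ that you outline; the derivation of $P_{\mathrm{MD}}$ is dispatched with a one-line ``Similarly, we can derive the MD probability $P_{\mathrm{MD}}$ as \eqref{eq_P_MD}.''

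The obstacle you flag for $\mathcal{H}_1$ is real, and the paper does not address it. The ``similarly'' step implicitly treats the combined term $|\mathbf{h}_{\mathrm{rw}}^H\boldsymbol{\Theta}_{\mathrm{r}}\mathbf{H}_{\mathrm{AR}}\mathbf{w}_{\mathrm{b}}|^2+|\mathbf{h}_{\mathrm{rw}}^H\boldsymbol{\Theta}_{\mathrm{r}}\mathbf{H}_{\mathrm{AR}}\mathbf{w}_{\mathrm{c}}|^2$ as a single exponential with mean $\tilde{\lambda}$, without discussing the rank-two quadratic-form structure you identify. None of the three closures you anticipate is invoked at this point in the paper: the large-system analytic technique appears only \emph{after} Theorem~\ref{th1} (and is applied to $\lambda,\tilde{\lambda},\gamma$ themselves, not to this distributional question), and there is no moment-matching or collinearity argument. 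So your proof is at least as complete as the paper's, and your identification of the gap is a genuine observation about the stated result rather than a deficiency of your plan.
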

		
		\begin{proof}
			The proof is given in Appendix \ref{Apd_A}.
		\end{proof}
		
		According to \eqref{eq_P_FA} and \eqref{eq_P_MD}, we find that $\sigma_\mathrm{w}^2$ and $\sigma_\mathrm{w}^2+\gamma P_\mathrm{j}^{\max} $ are two important boundaries affecting the values of $P_\mathrm{FA}$ and $P_\mathrm{MD}$.
		Specifically, when $\tau_\mathrm{dt}\leq\sigma_\mathrm{w}^2$, complete FAs will be performed by Willie and the MDs can be totally avoided.
		Furthermore, with the increase of detection threshold $\tau_\mathrm{dt}$ from $\sigma_\mathrm{w}^2$ to $+\infty$, $P_\mathrm{FA}$ will experience a decrease from 1 to 0, while $P_\mathrm{MD}$ will have an opposite trend.
	Based on the analytical expressions of \eqref{eq_P_FA} and \eqref{eq_P_MD}, the analytical closed-form expressions of DEP can be given as \eqref{eq_DEP_expression} which is shown at the top of the next page.

		\begin{figure*}[t]
		\hrulefill
		\begin{equation}\label{eq_DEP_expression}
			P_\mathrm{e}=\begin{cases}
				1, & \tau_\mathrm{d t}<\sigma_\mathrm{w}^2, \\
				1+\frac{\tilde{\lambda}\left(e^{-\frac{\tau_\mathrm{d t}-\sigma_\mathrm{w}^2}{\tilde{\lambda}}}-1\right)-\lambda e^{-\frac{\tau_\mathrm{d t}-\sigma_\mathrm{w}^2}{\lambda}}+\lambda}{\gamma P_\mathrm{j}^{\max}}, & \sigma_\mathrm{w}^2 \leq \tau_\mathrm{d t}<\sigma_\mathrm{w}^2+\gamma P_\mathrm{j}^{\max}, \\
				1+\frac{\tilde{\lambda} e^{-\frac{\tau_\mathrm{d t}-\sigma_\mathrm{w}^2}{\tilde{\lambda}}}\left(1-e^{\frac{\gamma P_\mathrm{j}^{\max}}{\tilde{\lambda}}}\right)+\lambda e^{-\frac{\tau_\mathrm{d t}-\sigma_\mathrm{w}^2}{\lambda}}\left(e^{\frac{\gamma P_\mathrm{j}^{\max}}{\lambda}}-1\right)}{\gamma P_\mathrm{j}^{\max}},&\tau_\mathrm{d t}\geq\gamma P_\mathrm{j}^{\max}+\sigma_\mathrm{w}^2 ,
			\end{cases}
		\end{equation}
	\end{figure*}
	
	It is important to point out that we consider the worst case scenario that Willie can optimize its detection threshold to minimize the DEP. Here, the closed-form solution of the optimal $\tau_\mathrm{dt}^*$ is provided in the following theorem.
	\begin{theorem}\label{th2}
		The optimal detection threshold $\tau_\mathrm{dt}^*$ to minimize the DEP of Willie in the considered STAR-RIS-assisted communication system is given by
		\vspace{-2mm}
		\begin{equation}\label{eq_opt_DT}
			\tau_\mathrm{dt}^*=\frac{\tilde{\lambda} \lambda}{\tilde{\lambda}-\lambda} \ln \Delta+\sigma_\mathrm{w}^2 \in \left[\sigma_\mathrm{w}^2+\gamma P_\mathrm{j}^{\max},\infty\right),
		\end{equation}
		where $\Delta=\frac{e^{\frac{\gamma P_\mathrm{j}^{\max}}{\lambda}}-1}{e^{\frac{\gamma P_\mathrm{j}^{\max} }{\tilde{\lambda}}}-1}$ is a function of $\lambda$, $\tilde{\lambda}$ and $\gamma$.
	\end{theorem}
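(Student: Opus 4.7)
The plan is to exploit the closed-form piecewise expression for $P_\mathrm{e}$ in \eqref{eq_DEP_expression} and minimize over $\tau_\mathrm{dt}$ branch by branch. First I would rule out the first branch ($\tau_\mathrm{dt}<\sigma_\mathrm{w}^2$), on which $P_\mathrm{e}\equiv 1$ attains the largest possible DEP and therefore cannot be optimal. Introducing the shift $x\triangleq \tau_\mathrm{dt}-\sigma_\mathrm{w}^2$, the two remaining branches can be handled by elementary differential calculus.

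On the second branch ($0\le x<\gamma P_\mathrm{j}^{\max}$), I would differentiate to get $dP_\mathrm{e}/dx=(e^{-x/\lambda}-e^{-x/\tilde\lambda})/(\gamma P_\mathrm{j}^{\max})$, which is strictly negative since the definitions of $\lambda$ and $\tilde\lambda$ in Theorem \ref{th1} imply $\tilde\lambda>\lambda$. Hence $P_\mathrm{e}$ is strictly decreasing on this branch and no interior stationary point exists, so any candidate minimizer is pushed to the right endpoint $x=\gamma P_\mathrm{j}^{\max}$. On the third branch ($x\ge\gamma P_\mathrm{j}^{\max}$), a similar differentiation yields $dP_\mathrm{e}/dx=\bigl(e^{-x/\tilde\lambda}(e^{\gamma P_\mathrm{j}^{\max}/\tilde\lambda}-1)-e^{-x/\lambda}(e^{\gamma P_\mathrm{j}^{\max}/\lambda}-1)\bigr)/(\gamma P_\mathrm{j}^{\max})$; setting this equal to zero and isolating $x$ produces $x^{\ast}=\frac{\lambda\tilde\lambda}{\tilde\lambda-\lambda}\ln\Delta$, which matches the claimed formula after translating back to $\tau_\mathrm{dt}$. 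A short second-derivative check at $x^{\ast}$, using the fact that the two exponential terms coincide there, gives a value proportional to $(\tilde\lambda-\lambda)/(\lambda\tilde\lambda)>0$, confirming a strict local minimum; together with the negativity of the derivative at the left endpoint of this branch and the limit $P_\mathrm{e}\to 1$ as $x\to\infty$, $x^{\ast}$ is the global minimizer on the third branch.

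The main obstacle, and the step that requires care, is verifying that $\tau_\mathrm{dt}^{\ast}$ actually lies inside the third region, i.e., $\frac{\lambda\tilde\lambda}{\tilde\lambda-\lambda}\ln\Delta\ge \gamma P_\mathrm{j}^{\max}$. I would reduce this inequality, after expanding $\ln\Delta$ and rearranging, to $\ln(1-e^{-\gamma P_\mathrm{j}^{\max}/\lambda})\ge \ln(1-e^{-\gamma P_\mathrm{j}^{\max}/\tilde\lambda})$, which follows immediately from the strict monotonicity of $t\mapsto \ln(1-e^{-t})$ on $(0,\infty)$ combined with $\gamma P_\mathrm{j}^{\max}/\lambda>\gamma P_\mathrm{j}^{\max}/\tilde\lambda$. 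Piecing together the strictly decreasing behaviour on the second branch with the interior minimum on the third branch, the global minimum of $P_\mathrm{e}$ over all $\tau_\mathrm{dt}>0$ is attained at $\tau_\mathrm{dt}^{\ast}=\frac{\tilde\lambda\lambda}{\tilde\lambda-\lambda}\ln\Delta+\sigma_\mathrm{w}^2\in[\sigma_\mathrm{w}^2+\gamma P_\mathrm{j}^{\max},\infty)$, as asserted.
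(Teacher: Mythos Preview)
Your proposal is correct and follows essentially the same approach as the paper's proof in Appendix~\ref{Apd_B}: a branch-by-branch analysis of the piecewise expression \eqref{eq_DEP_expression}, ruling out the first branch, showing monotone decrease on the second via the sign of the derivative, and locating the unique stationary point on the third. You actually supply two details the paper only asserts, namely the second-derivative check confirming the stationary point is a minimum and the explicit verification that $\tau_\mathrm{dt}^{\ast}\ge \sigma_\mathrm{w}^2+\gamma P_\mathrm{j}^{\max}$ via the monotonicity of $t\mapsto\ln(1-e^{-t})$.
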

	\begin{proof}
		The proof is given in Appendix \ref{Apd_B}. 
	\end{proof}
	
	Substituting \eqref{eq_opt_DT} into \eqref{eq_DEP_expression} and adopting some algebraic manipulations, the analytical closed-form expression of the minimum DEP can be obtained as
	\begin{align}\label{eq_opt_DEP}
		P_\mathrm{e}^*=&
		1-\notag \\
		&\frac{\tilde{\lambda}\left(e^{\frac{\gamma P_\mathrm{j}^{\max}}{\tilde{\lambda}}}-1\right)(\Delta)^{\frac{\lambda}{\lambda-\tilde{\lambda}}}-\lambda\left(e^{\frac{\gamma P_\mathrm{j}^{\max}}{\lambda}}-1\right)(\Delta)^{\frac{\tilde{\lambda}}{\lambda-\tilde{\lambda}}}}{\gamma P_\mathrm{j}^{\max}}.	
	\end{align}
	
	\vspace{-2mm}Since Alice only knows the statistical CSI of channel $\mathbf{h}_\mathrm{rw}$, the average minimum DEP over $\mathbf{h}_\mathrm{rw}$, denoted as $\overline{P}_\mathrm{e}^*=\mathbb{E}_{\mathbf{h}_{\mathrm{rw}}}\left(P_\mathrm{e}^*\right)$, is usually utilized to evaluate the covert communications between Alice and Bob \cite{zhou19, chen21}.
	In \eqref{eq_opt_DEP}, $\lambda$, $\tilde{\lambda} $ and $\gamma $ are all random variables including $\mathbf{h}_\mathrm{rw}$, and thus they are coupled to each other, which makes it challenging to calculate the $\overline{P}_\mathrm{e}^*$ directly.
	To solve this problem, large system analytic techniques are utilized to remove the couplings among $\lambda$, $\tilde{\lambda}$ and $\gamma$, which are widely adopted to  analyze the performance limitation of wireless communication systems (e.g., \cite{evans00,wu16, Wu20, ding20, Wang21}).
	By assuming that the STAR-RIS is equipped with a large number of low-cost elements, then we can obtain the asymptotic analytic result of $P_\mathrm{e}^*$.
	In particular, we first apply the large system analytic technique on $\lambda$,  then the asymptotic equality about $\lambda$ can be given as 
	\begin{equation}\label{eq_asy_lambda_1}
		\begin{aligned}[b]
			\lim _{N \rightarrow \infty} \frac{\left\|\mathbf{h}_{\mathrm{rw}}^H \boldsymbol{\Theta}_\mathrm{r}\right\|_2^2\varpi_\mathrm{c}}{N}  &=\lim _{N \rightarrow \infty} \frac{\mathbf{h}_{\mathrm{rw}}^H \boldsymbol{\Theta}_\mathrm{r} \boldsymbol{\Theta}_\mathrm{r}^H \mathbf{h}_{\mathrm{rw}}\varpi_\mathrm{c}}{N} \\
			&\stackrel{(a)}{\rightarrow} \frac{l_{\mathrm{rw}} \varpi_\mathrm{c}}{N} \operatorname{tr}\left(\boldsymbol{\Theta}_\mathrm{r} \boldsymbol{\Theta}_\mathrm{r}^H\right) \\
			&=\frac{l_{\mathrm{rw}}\varpi_\mathrm{c}}{N} \theta_\mathrm{r}=\frac{\lambda_\mathrm{a}}{N} ,
		\end{aligned}
	\end{equation}
	where the convergence (a) is due to \cite[Corollary 1]{evans00}.
	Here, $\varpi_\mathrm{c}=\mathbf{w}_\mathrm{c}^H\mathbf{w}_\mathrm{c}$, $\theta_\mathrm{r}=\operatorname{diag}(\boldsymbol{\Theta}_\mathrm{r})^H\operatorname{diag}(\boldsymbol{\Theta}_\mathrm{r})$,  and $\lambda_\mathrm{a}=l_{\mathrm{rw}}\varpi_\mathrm{c} \theta_\mathrm{r}$ is the asymptotic result of $\lambda$. Similarly, the asymptotic result  of $\tilde{\lambda} $ can be expressed as
	\begin{equation}\label{eq_asy_lambda_2}
		\tilde{\lambda}_\mathrm{a}=l_{\mathrm{rw}} \theta_\mathrm{r}(\varpi_\mathrm{b}+\varpi_\mathrm{c}),
	\end{equation}
	where we define $\varpi_\mathrm{b}=\mathbf{w}_\mathrm{b}^H\mathbf{w}_\mathrm{b}$.
	
	With the results of $\lambda_\mathrm{a}$ and $\tilde{\lambda}_\mathrm{a}$ based on the large system analytic technique, the uncertainty of $\lambda$ and $\tilde{\lambda}$ can be removed from the perspective of Alice. Substituting $\lambda_\mathrm{a}$ and $\tilde{\lambda}_\mathrm{a}$ into \eqref{eq_opt_DEP}, we obtain the asymptotic analytical result of the minimum DEP $P_\mathrm{e}^*$ with respect to (w.r.t.) the random variable $\gamma$
	\begin{equation}\label{eq_asy_DEP}
		P_\mathrm{ea}^*=1-\frac{l_{\mathrm{rw}}\theta_\mathrm{r}\varpi_\mathrm{b}}{\gamma P_\mathrm{j}^{\max}}(\Delta(\gamma))^{\frac{-\varpi_\mathrm{c}}{\varpi_\mathrm{b}}}\left(e^{\frac{\gamma P_\mathrm{j}^{\max}}{l_{\mathrm{rw}}\theta_\mathrm{r}(\varpi_\mathrm{b}+\varpi_\mathrm{c})}}-1\right).	\
	\end{equation}
	
	It is easy to verify that $\gamma=\left|\mathbf{h}_{\mathrm{rw}}^H \boldsymbol{\Theta}_\mathrm{t} \mathbf{h}_{\mathrm{rc}}^*\right|^2 \sim\exp(\lambda_\mathrm{rw})$ where $\lambda_\mathrm{rw}=\left\|\boldsymbol{\Theta}_\mathrm{t}\mathbf{h}_{\mathrm{rc}}^*\right\|^2_2$.
	By averaging $P^*_\mathrm{ea}$ over $\gamma$, we can get the average asymptotic analytical result of the minimum DEP as
	\begin{equation}\label{eq_avera_asy_DEP}
		\begin{aligned}
			\overline{P}_\mathrm{e a}^*=& \mathbb{E}_\gamma\left(P_\mathrm{e a}^*\right) \\
			=& \int_0^{+\infty}\Bigg(1-\frac{l_{\mathrm{rw}} \theta_\mathrm{r} \varpi_\mathrm{b}}{\gamma P_\mathrm{j}^{\max}}(\Delta(\gamma))^{\frac{-\varpi_\mathrm{c}}{\varpi_\mathrm{b}}}\\
			&\left(e^{\frac{\gamma P_\mathrm{j}^{\max}}{l_{\mathrm{rw}} \theta_\mathrm{r}\left(\varpi_\mathrm{b}+\varpi_\mathrm{c}\right)}}-1\right)\Bigg)
			\times \frac{1}{\lambda_{\mathrm{rw}}} e^{-\frac{\gamma}{\lambda_{\mathrm{rw}}}}d \gamma.
		\end{aligned}
	\end{equation}
	Due to the existence of $\Delta(\gamma)$ in $P_\mathrm{ea}^*$, the integral in \eqref{eq_avera_asy_DEP} for calculating $\overline{P}_\mathrm{ea}^*$ over the random variable $\gamma$  is non-integrable. Therefore, the exact analytical expression for $\overline{P}_\mathrm{ea}^*$ is mathematically intractable.
	In order to guarantee the covert constraint $\overline{P}_\mathrm{ea}^*\geq 1-\epsilon$  always holds,
	we further adopt a lower bound of $\overline{P}_\mathrm{ea}^*$ to evaluate the covertness of communications.
	
	Specifically, we use a lower bound $\hat\Delta(\gamma)\triangleq e^{ \gamma P_\mathrm{j}^{\max}\left(\frac{\tilde{\lambda}_\mathrm{a}-\lambda_\mathrm{a}}{\tilde{\lambda}_\mathrm{a} \lambda_\mathrm{a}}\right)}$ to replace $\Delta(\gamma)$.
	It is easy to demonstrate that $\Delta(\gamma)>\hat\Delta(\gamma)$  always holds when $\gamma\in(0,+\infty)$, and the relative gap between these two variables gradually goes to zero with the increase of $\gamma$.\footnote{The reasonability about the selection of $\hat\Delta(\gamma)$ (or $\hat{P}_\mathrm{e a}^*$) will be further verified based on the simulation results.}
	Replacing the $\Delta(\gamma)$ in \eqref{eq_avera_asy_DEP} with $\hat\Delta(\gamma)$, we can get a lower bound of $\overline{P}_\mathrm{ea}^*$ given by
	\begin{equation}\label{eq_low_avera_asy_DEP}
		\begin{aligned}[b]
			\hspace{-2mm}\hat{P}_\mathrm{e a}^*\triangleq &\int_0^{+\infty} \Bigg(1-\frac{l_{\mathrm{rw}} \theta_\mathrm{r} \varpi_\mathrm{b}}{\gamma P_\mathrm{j}^{\max}}\left(\hat\Delta(\gamma)\right)^{\frac{-\varpi_\mathrm{c}}{\varpi_\mathrm{b}}}\\
			&\left(e^{\frac{\gamma P_\mathrm{j}^{\max}}{l_{\mathrm{rw}}\theta_\mathrm{r}\left(\varpi_\mathrm{b}+\varpi_\mathrm{c}\right)}}-1\right)\Bigg)
			\times\frac{1}{\lambda_{\mathrm{rw}}} e^{-\frac{\gamma}{\lambda_{\mathrm{rw}}}}d \gamma\\
			=&1+\frac{l_{\mathrm{rw}} \theta_\mathrm{r} \varpi_\mathrm{b}\left(\ln \frac{l_{\mathrm{rw}} \theta_\mathrm{r}\left(\varpi_\mathrm{b}+\varpi_\mathrm{c}\right)}{l_{\mathrm{rw}} \theta_\mathrm{r}\left(\varpi_\mathrm{b}+\varpi_\mathrm{c}\right)+P_\mathrm{j}^{\max} \lambda_{\mathrm{rw}}}\right)}{P_\mathrm{j}^{\max} \lambda_{\mathrm{rw}}} <  \overline{P}_{\mathrm{ea}}^*.
		\end{aligned}
	\end{equation}
	
	Therefore, in the following sections, $\hat{P}_\mathrm{e a}^*\geq 1-\epsilon$ will be leveraged as a tighter covert constraint to jointly design the active and passive
	beamforming variables of the system.
	
	\subsection{Analysis on Communication Outage Probability}
	As we mentioned before, the randomness introduced by the jamming signal power $P_\mathrm{j}$ and the self-interference channel $h_{\mathrm{c}c}$ of  Carol is possible to result in communication outages between Alice and Bob/Carol. In order to guarantee the QoS of communications, the communication outage constraints should be taken into consideration.
	
	It is known that the channel capacity at Bob and Carol can be respectively written as 
	\begin{align}
		C_\mathrm{b}=&\notag\log _2 \Bigg(1+\\
		&\frac{\left|\mathbf{h}_{\mathrm{rb}}^H \boldsymbol{\Theta}_\mathrm{r}\mathbf{H}_{\mathrm{AR}} \mathbf{w}_ \mathrm{b}\right|^2}{\left|\mathbf{h}_{\mathrm{rb}}^H \boldsymbol{\Theta}_\mathrm{r} \mathbf{H}_{\mathrm{AR}} \mathbf{w}_ \mathrm{c}\right|^2+\left|\mathbf{h}_{\mathrm{rb}}^H \boldsymbol{\Theta}_\mathrm{t} \mathbf{h}_{\mathrm{rc}}^*\right|^2 P_\mathrm{j}+\sigma_\mathrm{b}^2}\Bigg),\label{eq_capac_b}\\
		C_\mathrm{c}=&\log _2 \left(1+\frac{\left|\mathbf{h}_{\mathrm{rc}}^H \boldsymbol{\Theta}_\mathrm{t} \mathbf{H}_{\mathrm{AR}} \mathbf{w}_\mathrm{c}\right|^2}{\left|\mathbf{h}_{\mathrm{rc}}^H \boldsymbol{\Theta}_\mathrm{t} \mathbf{H}_{\mathrm{AR}} \mathbf{w}_\mathrm{b}\right|^2+\left|h_\mathrm{cc}\right|^2 P_\mathrm{j}+\sigma_\mathrm{c}^2}\right).\label{eq_capac_c}
	\end{align}
	Hence, when the required transmission rate between Alice and Bob (or Carol) is selected as $R_\mathrm{b}$ (or $R_\mathrm{c}$), the closed-form expressions of the communication outage probabilities at Bob and Carol can be obtained  through Theorem \ref{th3}.
	\begin{theorem}\label{th3}
		The communication outage probabilities between Alice and Bob/Carol are respectively derived as
		\begin{align}
			\delta_\mathrm{AB} =&\begin{cases}
				0,  &\Upsilon>P_\mathrm{j}^{\max},\\	
				1-\frac{\Upsilon}{ P_\mathrm{j}^{\max}}, &0\leq\Upsilon\leq P_\mathrm{j}^{\max},\\
				1,&\Upsilon<0,
			\end{cases}\label{outage_exp_AB}\\
			\delta_\mathrm{AC}=&\begin{cases}e^{-\frac{\Gamma}{\phi P_\mathrm{j}^{\max}}}+\frac{\Gamma}{\phi P_\mathrm{j}^{\max}}\operatorname{Ei}\left(-\frac{\Gamma}{\phi P_\mathrm{j}^{\max}}\right)	,&\Gamma\geq 0,\\
				1, &\Gamma<0,	
			\end{cases}\label{outage_exp_AC}
		\end{align}
		where $\Upsilon=\frac{\left|\mathbf{h}_{\mathrm{rb}}^H \boldsymbol{\Theta}_\mathrm{r} \mathbf{H}_{\mathrm{AR}} \mathbf{w}_\mathrm{b}\right|^2-\left(2^{R_\mathrm{b}}-1\right)\left(\left|\mathbf{h}_{\mathrm{rb}}^H \boldsymbol{\Theta}_\mathrm{r} \mathbf{H}_{\mathrm{AR}} \mathbf{w}_ \mathrm{c}\right|^2+\sigma_\mathrm{b}^2\right)}{\left(2^{R_\mathrm{b}}-1\right)\left|\mathbf{h}_{\mathrm{rb}}^H \boldsymbol{\Theta}_\mathrm{t} \mathbf{h}_{\mathrm{rc}}^*\right|^2}$ and
		$\Gamma=\frac{\left|\mathbf{h}_{\mathrm{rc}}^H\boldsymbol{\Theta}_\mathrm{t} \mathbf{H}_{\mathrm{AR}} \mathbf{w}_ \mathrm{c}\right|^2-\left(2^{R_\mathrm{c}}-1\right)\left(\left|\mathbf{h}_{\mathrm{rc}}^H\boldsymbol{\Theta}_\mathrm{t} \mathbf{H}_{\mathrm{AR}} \mathbf{w}_ \mathrm{b}\right|^2+\sigma_\mathrm{c}^2\right)}{\left(2^{R_\mathrm{c}}-1\right)}$. Here,  $\operatorname{Ei}(\cdot)$ is the exponential internal function given by
		\begin{equation}\label{eq_expon_fun}
			\operatorname{Ei}(x)=-\int_{-x}^{\infty}\frac{e^{-t}}{t} d t.
		\end{equation}
	\end{theorem}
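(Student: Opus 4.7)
The plan is to convert each outage event into an explicit inequality in the random quantities $P_\mathrm{j}$ and $|h_\mathrm{cc}|^2$, and then integrate against the known distributions (uniform on $[0,P_\mathrm{j}^{\max}]$ for $P_\mathrm{j}$, exponential with mean $\phi$ for $|h_\mathrm{cc}|^2=\phi|g_\mathrm{cc}|^2$). The instantaneous CSI $\mathbf{H}_{\mathrm{AR}}$, $\mathbf{h}_{\mathrm{rb}}$, $\mathbf{h}_{\mathrm{rc}}$ and the beamformers are known to Alice, so all inner products appearing in \eqref{eq_capac_b}--\eqref{eq_capac_c} are treated as deterministic constants when computing the probabilities.

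For $\delta_{\mathrm{AB}}$, I would start from $C_\mathrm{b}<R_\mathrm{b}$ and, using the monotonicity of $\log_2$, rewrite it as the SINR being below $2^{R_\mathrm{b}}-1$. Cross-multiplying by the (positive) denominator in \eqref{eq_capac_b} and isolating $P_\mathrm{j}$ on one side, the outage event reduces to $P_\mathrm{j}>\Upsilon$, with $\Upsilon$ defined exactly as in the theorem. Since $P_\mathrm{j}\sim U(0,P_\mathrm{j}^{\max})$, a three-case split on the sign and magnitude of $\Upsilon$ relative to $P_\mathrm{j}^{\max}$ immediately yields \eqref{outage_exp_AB}.

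For $\delta_{\mathrm{AC}}$, the same algebraic manipulation on \eqref{eq_capac_c} reduces the outage event to $|h_\mathrm{cc}|^2 P_\mathrm{j}>\Gamma$. If $\Gamma<0$ this is trivially a.s.\ true. For $\Gamma\geq0$, I would condition on $Z\triangleq|h_\mathrm{cc}|^2\sim\exp(\phi)$ with PDF $f_Z(z)=\phi^{-1}e^{-z/\phi}$, so that
\begin{equation}
\delta_{\mathrm{AC}}=\int_{0}^{\infty}\!\!f_Z(z)\,\operatorname{Pr}\!\left(P_\mathrm{j}>\Gamma/z\right)dz.
\end{equation}
The inner probability is $0$ when $z\leq\Gamma/P_\mathrm{j}^{\max}$ and $1-\Gamma/(zP_\mathrm{j}^{\max})$ otherwise, so the outer integral splits into two pieces over $z\in[\Gamma/P_\mathrm{j}^{\max},\infty)$. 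The first gives $e^{-\Gamma/(\phi P_\mathrm{j}^{\max})}$ directly, and the second, after the substitution $u=z/\phi$, becomes $\frac{\Gamma}{\phi P_\mathrm{j}^{\max}}\int_{\Gamma/(\phi P_\mathrm{j}^{\max})}^{\infty}\frac{e^{-u}}{u}\,du$, which by \eqref{eq_expon_fun} equals $-\frac{\Gamma}{\phi P_\mathrm{j}^{\max}}\operatorname{Ei}\!\left(-\Gamma/(\phi P_\mathrm{j}^{\max})\right)$. Combining the two pieces reproduces \eqref{outage_exp_AC}.

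The main obstacle is the second case of $\delta_{\mathrm{AC}}$: because two independent randomnesses ($Z$ and $P_\mathrm{j}$) couple multiplicatively inside the SINR, the standard CDF shortcut used for $\delta_{\mathrm{AB}}$ no longer applies, and one must carefully handle the lower integration limit $\Gamma/P_\mathrm{j}^{\max}$ to avoid a spurious divergence at $z=0$ and to recognize the resulting integral as the exponential integral $\operatorname{Ei}(\cdot)$. Once that identification is made, the remaining steps are routine algebra.
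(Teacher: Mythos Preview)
Your proposal is correct and follows essentially the same route as the paper's proof: rewrite each outage event as a threshold condition on the random quantities, then integrate against the uniform law of $P_\mathrm{j}$ and the exponential law of $|g_\mathrm{cc}|^2$. The only cosmetic difference is that for $\delta_\mathrm{AC}$ the paper conditions on $P_\mathrm{j}$ first and integrates over $|g_\mathrm{cc}|^2$ (writing the double integral as $\int_{0}^{P_\mathrm{j}^{\max}}\int_{\Gamma/(\phi y)}^{\infty}e^{-x}\,\tfrac{1}{P_\mathrm{j}^{\max}}\,dx\,dy$), whereas you condition on $Z=|h_\mathrm{cc}|^2$ first; by Fubini the two orders are equivalent and lead to the same exponential-integral identification. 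One small slip in your write-up: the ``second piece'' of the split integral carries a minus sign from the $-\Gamma/(zP_\mathrm{j}^{\max})$ term, so after the substitution and the $\operatorname{Ei}$ identification the two sign flips cancel to give $+\tfrac{\Gamma}{\phi P_\mathrm{j}^{\max}}\operatorname{Ei}\!\big(-\Gamma/(\phi P_\mathrm{j}^{\max})\big)$, which is exactly \eqref{outage_exp_AC}.
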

	
	\begin{proof}
		The proof is given in Appendix \ref{Apd_C}.
	\end{proof}
	
	The communication outage constraints are then defined as $\delta_\mathrm{AB}\leq\iota$ and $\delta_\mathrm{AC}\leq\kappa$ where  $\iota$ and $\kappa$ are two communication outage thresholds required by the system performance indicators for Bob and Carol, respectively.
	In this paper, we try to maximize the covet rate of Bob under the covert constraint $\hat{\mathrm{P}}_\mathrm{e a}^*\geq 1-\epsilon$ and the communication outage constraints.
	It is easy to note that $\delta_\mathrm{AB}$ and $\delta_\mathrm{AC}$ are segment functions with uncertain segment points $\Upsilon$ and $\Gamma$ determined by the active and passive beamforming variables, i.e., $\mathbf{w}_\mathrm{b}$, $\mathbf{w}_\mathrm{c}$  and $\boldsymbol{\Theta}_\mathrm{r}$, $\boldsymbol{\Theta}_\mathrm{t}$, which will be jointly optimized in the next section. Hence, it is difficult to handle the two communication outage constraints in an optimization problem.
	In order to facilitate the optimization and analysis of the considered problem, we equivalently re-express the two communication outage constraints as
	\begin{align}
		\Upsilon\geq P_\mathrm{j}^{\max}(1-\iota) \Longleftrightarrow R_\mathrm{b} \leq R_\mathrm{bb},\label{eq_upper_rate_b}\\
		\Gamma\geq \sigma^* \Longleftrightarrow R_\mathrm{c} \leq R_\mathrm{cc},\label{eq_upper_rate_c}
	\end{align}
	where $\sigma^*$  is the solution to the equation of $\delta_\mathrm{AC}=\kappa$ which can be numerically solved by the bi-section search method.
	In addition, $R_\mathrm{bb}$  and $R_\mathrm{cc}$ are respectively the upper bounds of $R_\mathrm{b}$ and $R_\mathrm{c}$ in order to guarantee the two communication outage constraints which can be expressed as
	\begin{align}
		&\hspace{-11mm}R_\mathrm{bb}=\notag\log _2 \Bigg(1+\\
		&\hspace{-11mm}\frac{\left|\mathbf{h}_{\mathrm{rb}}^H \boldsymbol{\Theta}_\mathrm{r} \mathbf{H}_{\mathrm{AR}} \mathbf{w}_ \mathrm{b}\right|^2}{\left|\mathbf{h}_{\mathrm{rb}}^H \boldsymbol{\Theta}_\mathrm{r} \mathbf{H}_{\mathrm{AR}} \mathbf{w}_ \mathrm{c}\right|^2+\left|\mathbf{h}_{\mathrm{rb}}^H \boldsymbol{\Theta}_\mathrm{t} \mathbf{h}_{\mathrm{rc}}^*\right|^2 P_\mathrm{j}^{\max}(1-\iota)+\sigma_\mathrm{b}^2}\Bigg),\label{eq_outrate_b}\\
		R_\mathrm{cc}=&\log _2 \left(1+\frac{\left|\mathbf{h}_{\mathrm{rc}}^H \boldsymbol{\Theta}_\mathrm{t} \mathbf{H}_{\mathrm{AR}} \mathbf{w}_\mathrm{c}\right|^2}{\left|\mathbf{h}_{\mathrm{rc}}^H \boldsymbol{\Theta}_\mathrm{t} \mathbf{H}_{\mathrm{AR}} \mathbf{w}_\mathrm{b}\right|^2+\sigma^*+\sigma_\mathrm{c}^2}\right). \label{eq_outrate_c}
	\end{align}
	
	Base on the above analysis, we know that the maximum covert rate for Bob under the communication outage constraint is $R_\mathrm{bb}$, then we can maximize $R_\mathrm{bb}$ accordingly to improve the system covert performance. Similarly, the maximum communication rate for Carol under the communication outage constraint is $R_\mathrm{cc}$, and we introduce a constraint $R_\mathrm{cc}\geq R^*$ to guarantee the QoS for the assistant user Carol where $R^*$ is a minimum required communication rate for Carol.

	\section{Problem Formulation and Algorithm Design}\label{sec:S4}
	\subsection{Optimization Problem Formulation}\label{sec:S4_P1}
	On the basis of the previous discussions in section \ref{sec:S3}, we formulate the optimization problem in this section. Specifically, we will maximize the covert rate between Alice and Bob under the covert communication constraint while ensuring the QoS at Carol with the QoS constraint, by jointly optimizing the active and passive beamforming variables, i.e., $\mathbf{w}_\mathrm{b}$, $\mathbf{w}_\mathrm{c}$  and $\boldsymbol{\Theta}_\mathrm{r}$, $\boldsymbol{\Theta}_\mathrm{t}$. Hence, the optimized problem is formulated as
	\begin{subequations}\label{eq_for_opt}
		\begin{align}
			&\hspace{-2mm}\max _{\boldsymbol{\Theta}_\mathrm{r}, \boldsymbol{\Theta}_\mathrm{t}, \mathbf{w}_\mathrm{b}, \mathbf{w}_\mathrm{c}}R_\mathrm{bb},\notag\\
			&\hspace{-2mm}\qquad\text { s.t. }\left\|\mathbf{w}_\mathrm{b}\right\|_2^2+\left\|\mathbf{w}_\mathrm{c}\right\|_2^2 \leq P_{\text {max }}\label{eq_for_opt_1} \\
			&\hspace{-2mm}\qquad\qquad\frac{l_{\mathrm{rw}} \theta_\mathrm{r} \varpi_\mathrm{b}\ln\left( \frac{l_{\mathrm{rw}} \theta_\mathrm{r}\left(\varpi_\mathrm{b}+\varpi_\mathrm{c}\right)}{l_{\mathrm{rw}} \theta_\mathrm{r}\left(\varpi_\mathrm{b}+\varpi_\mathrm{c}\right)+P_\mathrm{j}^{\max} \lambda_{\mathrm{rw}}}\right)}{P_\mathrm{j}^{\max} \lambda_{\mathrm{rw}}} \geq -\epsilon ,\label{eq_for_opt_2}\\
			&\hspace{-2mm}\qquad\qquad R_\mathrm{cc}\geq R^*, \label{eq_for_opt_3}\\
			&\hspace{-2mm}\qquad\qquad\beta_\mathrm{r}^n+\beta_\mathrm{t}^n=1,\label{eq_for_opt_4} \\
			&\hspace{-2mm}\qquad\qquad\phi_\mathrm{r}^n, \phi_\mathrm{t}^n \in[0,2 \pi).\label{eq_for_opt_5}
		\end{align}
	\end{subequations}
	where \eqref{eq_for_opt_1} is the transmission power constraint for Alice with $P_{\max}$ being the maximum transmitted power; \eqref{eq_for_opt_2} is an equivalent covert communication constraint of $\hat{P}_\mathrm{e a}^*\geq 1-\epsilon$; 
	\eqref{eq_for_opt_3} represents the QoS constraint for Carol; 
	\eqref{eq_for_opt_4} and \eqref{eq_for_opt_5} are the amplitude and phase shift constraints for STAR-RIS, respectively.
	Actually, it is challenging to solve the formulated optimization problem because of the following reasons.
	Firstly, the active and passive beamforming variables, $\mathbf{w}_\mathrm{b}$, $\mathbf{w}_\mathrm{c}$  and $\boldsymbol{\Theta}_\mathrm{r}$, $\boldsymbol{\Theta}_\mathrm{t}$ are strongly coupled in the objective function}, covert communication constraint \eqref{eq_for_opt_2} and QoS constraint \eqref{eq_for_opt_3}.
In addition, the utilization of STAR-RIS introduces a characteristic amplitude constraint \eqref{eq_for_opt_4} due to the fact that $\boldsymbol{\Theta}_\mathrm{r}$ and $\boldsymbol{\Theta}_\mathrm{t}$ depend on each other in terms of element amplitudes.

Hence, the traditional convex optimization algorithms cannot be used directly to solve the non-convex optimized problem \eqref{eq_for_opt}. To tackle this issue, the alternating strategy is leveraged to design the optimization algorithm. Specifically, we first divide the original problem into three subproblems where two subproblems are  focused on designing the active beamformer variables $\mathbf{w}_\mathrm{b}$, $\mathbf{w}_\mathrm{c}$, while the passive beamformer variables $\boldsymbol{\Theta}_\mathrm{r}$, $\boldsymbol{\Theta}_\mathrm{t}$ are obtained by solving the last subproblem. After the convergence of the algorithm, we can finally obtain the solution for joint active and passive beamforming.

\vspace{-2mm}
\subsection{Active Beamforming Design}\label{sec:S4_P2}
In this section, we formulate two subproblems based on the original problem \eqref{eq_for_opt} which are solved to design the active beamforming variables $\mathbf{w}_\mathrm{b}$ and $\mathbf{w}_\mathrm{c}$, respectively. 

\subsubsection{Active Beamforming Design for $\mathbf{w}_\mathrm{b}$}
First, we consider the active beamforming design for $\mathbf{w}_\mathrm{b}$ with given $\mathbf{w}_\mathrm{c}$ and the passive beamforming variables $\boldsymbol{\Theta}_\mathrm{r}$, $\boldsymbol{\Theta}_\mathrm{t}$.
In this circumstance, the objective function of the original problem turns into maximizing $\left|\mathbf{h}_{\mathrm{rb}}^H \boldsymbol{\Theta}_\mathrm{r} \mathbf{H}_{\mathrm{AR}} \mathbf{w}_ \mathrm{b}\right|^2$, and the QoS constraint \eqref{eq_for_opt_3} is equivalent to $\left|\mathbf{h}_{\mathrm{rc}}^H \boldsymbol{\Theta}_\mathrm{t} \mathbf{H}_{\mathrm{AR}} \mathbf{w}_\mathrm{b}\right|^2\leq f(R^*)$ with $f(R^*)=\frac{\left|\boldsymbol{\vartheta}_\mathrm{t}^T(\mathbf{h}_{\mathrm{rc}}^*\circ\mathbf{H}_\mathrm{AR})\mathbf{w}_\mathrm{c}\right|^2}{2^{R^*}-1}-\sigma^*-\sigma_\mathrm{c}^2$.
Hence, the corresponding subproblem can be formulated as
\begin{subequations}\label{eq_acti_wb}
\begin{align}
	&\max _{\mathbf{w}_\mathrm{b}}\left|\boldsymbol{\vartheta}_\mathrm{r}^T\mathbf{H}_{\mathrm{rb}}^*\mathbf{H}_{\mathrm{AR}}\mathbf{w}_\mathrm{b}\right|^2 ,\notag \\
	&\text { s.t. }\left\|\mathbf{w}_\mathrm{b}\right\|_2^2 \leq P_{\max }-\left\|\mathbf{w}_\mathrm{c}\right\|_2^2 ,\label{eq_acti_wb_1} \\
	&\qquad \frac{l_{\mathrm{rw}} \theta_\mathrm{r} \varpi_\mathrm{b}\ln \left(1+\frac{P_\mathrm{j}^{\max} \lambda_{\mathrm{rw}}}{l_{\mathrm{rw}} \theta_\mathrm{r}\left(\varpi_\mathrm{b}+\varpi_\mathrm{c}\right)}\right)}{P_\mathrm{j}^{\max} \lambda_{\mathrm{rw}}}\leq \epsilon,\label{eq_acti_wb_2}\\
	&\qquad \left|\boldsymbol{\vartheta}_\mathrm{t}^T\mathbf{H}_{\mathrm{rc}}^*\mathbf{H}_\mathrm{AR}\mathbf{w}_\mathrm{b}\right|^2\leq f(R^*).\label{eq_acti_wb_3}
\end{align}
\end{subequations}
Here, it is easy to verify that 
\begin{align}
\mathbf{h}_{\mathrm{rb}}^H \boldsymbol{\Theta}_\mathrm{r} \mathbf{H}_{\mathrm{AR}}=\boldsymbol{\vartheta}_\mathrm{r}^T\mathbf{H}_{\mathrm{rb}}^*\mathbf{H}_{\mathrm{AR}}, \label{eq_trans_r} \\
\mathbf{h}_{\mathrm{rc}}^H \boldsymbol{\Theta}_\mathrm{t} \mathbf{H}_{\mathrm{AR}}=\boldsymbol{\vartheta}_\mathrm{t}^T\mathbf{H}_\mathrm{rc}^*\mathbf{H}_\mathrm{AR},\label{eq_trans_t}
\end{align}
by vertorizing the diagonal matrixes $\boldsymbol{\Theta}_\mathrm{r}$ and $\boldsymbol{\Theta}_\mathrm{t}$ as $\boldsymbol{\vartheta}_\mathrm{r}=\operatorname{diag}(\boldsymbol{\Theta}_\mathrm{r})$ and $\boldsymbol{\vartheta}_\mathrm{t}=\operatorname{diag}(\boldsymbol{\Theta}_\mathrm{t})$, and matrixing the vectors $\mathbf{h}_{\mathrm{rb}}$ and $\mathbf{h}_{\mathrm{rc}}$ as $\mathbf{H}_{\mathrm{rb}}=\operatorname{Diag}(\mathbf{h}_{\mathrm{rb}})$ and $\mathbf{H}_{\mathrm{rc}}=\operatorname{Diag}(\mathbf{h}_{\mathrm{rc}})$.    
It is worth noting that the optimization problem \eqref{eq_acti_wb} is still non-convex due to the non-concave objective function and the non-convex covert communication constraint \eqref{eq_acti_wb_2} recalling that $\varpi_\mathrm{b}=\mathbf{w}_\mathrm{b}^H\mathbf{w}_\mathrm{b}$.

To effectively address the non-convex problem \eqref{eq_acti_wb}, we resort to the SDR method \cite{luo10}.  
In particular, let $\mathbf{W}_\mathrm{b}=\mathbf{w}_\mathrm{b}\mathbf{w}_\mathrm{b}^H$, then problem \eqref{eq_acti_wb} can be equivalently transformed as
\begin{subequations}\label{eq_trans_acti_wb}
\begin{align}
	&\max _{\mathbf{W}_\mathrm{b}}\operatorname{Tr}(\mathbf{A}\mathbf{W}_\mathrm{b}) ,\notag \\
	&\text { s.t. }\operatorname{Tr}(\mathbf{W}_\mathrm{b}) \leq P_{\max }-\left\|\mathbf{w}_\mathrm{c}\right\|_2^2 ,\label{eq_trans_acti_wb_1} \\
	&\qquad \varpi_\mathrm{b}\ln \left(1+\frac{P_\mathrm{j}^{\max} \lambda_{\mathrm{rw}}}{l_{\mathrm{rw}} \theta_\mathrm{r}\left(\varpi_\mathrm{b}+\varpi_\mathrm{c}\right)}\right)\leq \frac{\epsilon P_\mathrm{j}^{\max} \lambda_{\mathrm{rw}}}{l_{\mathrm{rw}} \theta_\mathrm{r} },\label{eq_trans_acti_wb_2}\\
	&\qquad \operatorname{Tr}(\mathbf{B}\mathbf{W}_\mathrm{b})\leq f(R^*),\label{eq_trans_acti_wb_3}\\
	&\qquad \mathbf{W}_\mathrm{b}\succeq \mathbf{0},\label{eq_trans_acti_wb_4}\\
	&\qquad \operatorname{rank}(\mathbf{W}_\mathrm{b})=1,\label{eq_trans_acti_wb_5}
\end{align}
\end{subequations}
where $\mathbf{A}=(\mathbf{H}_\mathrm{rb}^*\mathbf{H}_\mathrm{AR})^H\boldsymbol{\vartheta}_\mathrm{r}^*\boldsymbol{\vartheta}_\mathrm{r}^T(\mathbf{H}_\mathrm{rb}^*\mathbf{H}_\mathrm{AR}), \mathbf{B}= (\mathbf{H}_{\mathrm{rc}}^*\mathbf{H}_\mathrm{AR})^H$ $\boldsymbol{\vartheta}_\mathrm{t}^*\boldsymbol{\vartheta}_\mathrm{t}^T(\mathbf{H}_{\mathrm{rc}}^*\mathbf{H}_\mathrm{AR})$.
Although \eqref{eq_trans_acti_wb_2} is still a  non-convex constraint about $\mathbf{W}_\mathrm{b}$, we note that the left-side of \eqref{eq_trans_acti_wb_2}, denoted as $g(\varpi_\mathrm{b})$, is a concave function of $\varpi_\mathrm{b}=\operatorname{Tr}(\mathbf{W}_\mathrm{b})$.
Thus, the first-order Taylor expansion of $g(\varpi_\mathrm{b})$ can be leveraged to replace it iteratively, which is an upper-bound linear approximation and generate a tighter convex substitute for \eqref{eq_trans_acti_wb_2}. 
Specifically,  in the $(m+1)$-th iteration of the overall proposed algorithm ($m=0,1,\cdots$), the first-order Taylor expansion of  $g(\varpi_\mathrm{b})$, with the given point $\varpi_\mathrm{b}^{(m)}$ obtained in the $m$-th iteration,  can be expressed as 
\begin{equation}\label{eq_tylorexpan_wb}
\begin{aligned}
	\hspace{-2mm}\widehat{g}(\varpi_\mathrm{b},\varpi_\mathrm{b}^{(m)})=g(\varpi_\mathrm{b}^{(m)})+ \frac{\partial g}{\partial\mathrm{\varpi_\mathrm{b}}}(\varpi_\mathrm{b}^{(m)})*(\varpi_\mathrm{b}-\varpi_\mathrm{b}^{(m)}),
\end{aligned}
\end{equation}
where $ \frac{\partial g}{\partial\mathrm{\varpi_\mathrm{b}}}(\varpi_\mathrm{b}^{(m)})$ is given by \eqref{eq_trans_acti_wb_partial} at the top of the next page.
\begin{figure*}[t]
\hrulefill
\vspace*{4pt}
\begin{equation}\label{eq_trans_acti_wb_partial}
	\begin{aligned}
		\frac{\partial g}{\partial\mathrm{\varpi_\mathrm{b}}}(\varpi_\mathrm{b}^{(m)})=\ln \left(1+\frac{P_\mathrm{j}^{\max} \lambda_{\mathrm{rw}}}{l_{\mathrm{rw}} \theta_\mathrm{r}\left(\varpi_\mathrm{b}^{(m)}+\varpi_\mathrm{c}\right)}\right)+\frac{-\varpi_\mathrm{b}^{(m)}
			l_\mathrm{rw}\theta_\mathrm{r}P_\mathrm{j}^{\max}\lambda_\mathrm{rw}}{(\varpi_\mathrm{b}^{(m)}+\varpi_\mathrm{c})\left(l_\mathrm{rw}\theta_\mathrm{r}(\varpi_\mathrm{b}^{(m)}
			+\varpi_\mathrm{c})+P_\mathrm{j}^{\max}\lambda_\mathrm{rw}\right)},
	\end{aligned}
\end{equation}
\end{figure*}

\vspace{2mm}
Therefore, the problem \eqref{eq_trans_acti_wb} in the $(m+1)$-th iteration can be reformulated as
\begin{subequations}\label{eq_trans_acti_wb_remove}
\begin{align}
	&\max _{\mathbf{W}_\mathrm{b}}\operatorname{Tr}(\mathbf{A}\mathbf{W}_\mathrm{b}) ,\notag \\
	&\text { s.t. }	\widehat{g}(\varpi_\mathrm{b},\varpi_\mathrm{b}^{(m)})\leq\frac{\epsilon P_\mathrm{j}^{\max} \lambda_{\mathrm{rw}}}{l_{\mathrm{rw}} \theta_\mathrm{r} } ,\label{eq_trans_acti_wb_remove_1} \\
	&\qquad \eqref{eq_trans_acti_wb_1}, \eqref{eq_trans_acti_wb_3}, \eqref{eq_trans_acti_wb_4}, \eqref{eq_trans_acti_wb_5} .\label{eq_trans_acti_wb_remove_2}
\end{align}
\end{subequations}
Note that the remaining non-convexity of problem \eqref{eq_trans_acti_wb_remove} is the non-convex rank-one constraint \eqref{eq_trans_acti_wb_5}. To tackle this issue, we employ the SDR techniques to remove this constraint, and then problem \eqref{eq_trans_acti_wb_remove} can be transformed into a standard convex semidefinite programming (SDP) problem which is able to be effectively solved by existing convex optimization solvers such as CVX \cite{grant14cvx}. As stated in \cite{luo10,wu19}, there is a relatively high probability that the obtained optimal SDP solution cannot satisfy the rank-one constraint, i.e., $\operatorname{rank}(\mathbf{W}_\mathrm{b})\neq 1$. Thus, it is necessary to implement additional steps to construct the rank-one solution from the obtained higher-rank solution, through the commonly used eigenvector approximation or Gaussian randomization methods \cite{luo10,wu18}.

\subsubsection{Active Beamforming Design for $\mathbf{w}_\mathrm{c}$}
On the other hand, for given $\mathbf{w}_\mathrm{b}$, $\boldsymbol{\Theta}_\mathrm{r}$ and $\boldsymbol{\Theta}_\mathrm{t}$, the original optimization problem \eqref{eq_for_opt} can be simplified as
\begin{subequations}\label{eq_activ_wc}
\begin{align}
	&\min _{\mathbf{w}_\mathrm{c}}\left|\boldsymbol{\vartheta}_\mathrm{r}^T\mathbf{H}_{\mathrm{rb}}^*\mathbf{H}_{\mathrm{AR}}\mathbf{w}_\mathrm{c}\right|^2 ,\notag \\
	&\text { s.t. }\left\|\mathbf{w}_\mathrm{c}\right\|_2^2 \leq P_{\max }-\left\|\mathbf{w}_\mathrm{b}\right\|_2^2 ,\label{eq_activ_wc_1} \\
	&\qquad \varpi_\mathrm{c}\leq C(\epsilon),\label{eq_activ_wc_2}\\
	&\qquad \left|\boldsymbol{\vartheta}_\mathrm{t}^T\mathbf{H}_{\mathrm{rc}}^*\mathbf{H}_\mathrm{AR}\mathbf{w}_\mathrm{c}\right|^2\geq \hat{f}(R^*), \label{eq_activ_wc_3}
\end{align}
\end{subequations}
where $C(\epsilon)=\frac{P_\mathrm{j}^{\max}\lambda_\mathrm{rw}}{l_\mathrm{rw}\theta_\mathrm{r}\Big(e^\frac{\epsilon P_\mathrm{j}^{\max}\lambda_\mathrm{rw}}{l_\mathrm{rw}\theta_\mathrm{r}\varpi_\mathrm{b}}-1\Big)}-\varpi_\mathrm{b}$,
$\hat{f}(R^*)=(2^{R^*}-1)\Big(\left|\boldsymbol{\vartheta}_\mathrm{t}^T\mathbf{H}_{\mathrm{rc}}^*\mathbf{H}_\mathrm{AR}\mathbf{w}_\mathrm{b}\right|^2+
\sigma^*+\sigma_\mathrm{c}^2\Big)$, and $\varpi_\mathrm{c}=\mathbf{w}_\mathrm{c}^H\mathbf{w}_\mathrm{c}$.

Similarly, we leverage the SDR method to handle the non-convex optimized problem \eqref{eq_activ_wc}. By defining $\mathbf{W}_\mathrm{c}=\mathbf{w}_\mathrm{c}\mathbf{w}_\mathrm{c}^H$, the problem can be  transformed  as
\begin{subequations}\label{eq_activ_wc_trans}
\begin{align}
	&\min _{\mathbf{W}_\mathrm{c}}\operatorname{Tr}(\mathbf{A}\mathbf{W}_\mathrm{c}) ,\notag \\
	&\text { s.t. }\operatorname{Tr}(\mathbf{W}_\mathrm{c}) \leq P_{\max }-\left\|\mathbf{w}_\mathrm{b}\right\|_2^2 ,\label{eq_activ_wc_trans_1} \\
	&\qquad \operatorname{Tr}(\mathbf{W}_\mathrm{c})\leq C(\epsilon),\label{eq_activ_wc_trans_2}\\
	&\qquad \operatorname{Tr}(\mathbf{B}\mathbf{W}_\mathrm{c})\geq\hat{f}(R^*), \label{eq_activ_wc_trans_3}\\
	&\qquad \operatorname{rank}(\mathbf{W}_\mathrm{c})=1,\label{eq_activ_wc_trans_4}\\
	&\qquad \mathbf{W}_\mathrm{c}\succeq \mathbf{0}.\label{eq_activ_wc_trans5}
\end{align}
\end{subequations}
The problem \eqref{eq_activ_wc_trans} can be optimally solved by removing the rank-one constraint \eqref{eq_activ_wc_trans_4}, and then the rank-one solution can be constructed through the existing approaches. 

\vspace{-4mm}
\subsection{Joint Passive Beamforming Design for STAR-RIS}\label{sec:S4_P3}
After the active beamforming design, we can optimize the passive beamforming variables $\boldsymbol{\vartheta}_\mathrm{r}$ and $\boldsymbol{\vartheta}_\mathrm{t}$, with  $\mathbf{w}_\mathrm{b}$ and $\mathbf{w}_\mathrm{c}$ being fixed as the obtained solutions in the previous subsection. On the basis of  the original problem \eqref{eq_for_opt}, the corresponding optimization problem for joint passive beamforming for STAR-RIS design can be expressed as 
\begin{subequations}\label{eq_passi}
\begin{align}
	&\max _{\boldsymbol{\vartheta}_\mathrm{r}, \boldsymbol{\vartheta}_\mathrm{t}}\gamma_{bb}(\boldsymbol{\vartheta}_\mathrm{r}, \boldsymbol{\vartheta}_\mathrm{t}),\notag \\%
	&\text { s.t. }~\frac{l_{\mathrm{rw}} \theta_\mathrm{r} \varpi_\mathrm{b}\ln \left(1+\frac{P_\mathrm{j}^{\max} \lambda_{\mathrm{rw}}}{l_{\mathrm{rw}} \theta_\mathrm{r}\left(\varpi_\mathrm{b}+\varpi_\mathrm{c}\right)}\right)}{P_\mathrm{j}^{\max} \lambda_{\mathrm{rw}}}\leq \epsilon,\label{eq_passi_1} \\
	&\qquad~\tilde{f}(\boldsymbol{\vartheta}_\mathrm{t})\geq0,\label{eq_passi_2}\\
	&\qquad~\beta_\mathrm{r}^n+\beta_\mathrm{t}^n=1,\label{eq_passi_3} \\
	&\qquad~\phi_\mathrm{r}^n, \phi_\mathrm{t}^n \in[0,2 \pi),\label{eq_passi_4}
\end{align}
\end{subequations}
where 
\begin{align}
\gamma_{bb}=&\frac{\left|\mathbf{h}_{\mathrm{rb}}^H \boldsymbol{\Theta}_\mathrm{r} \mathbf{H}_{\mathrm{AR}} \mathbf{w}_ \mathrm{b}\right|^2}{\left|\mathbf{h}_{\mathrm{rb}}^H \boldsymbol{\Theta}_\mathrm{r} \mathbf{H}_{\mathrm{AR}} \mathbf{w}_ \mathrm{c}\right|^2+\left|\mathbf{h}_{\mathrm{rb}}^H \boldsymbol{\Theta}_\mathrm{t} \mathbf{h}_{\mathrm{rc}}^*\right|^2 P_\mathrm{j}^{\max}(1-\iota)+\sigma_\mathrm{b}^2},\notag\\
=&\frac{\boldsymbol{\vartheta}_\mathrm{r}^T\mathbf{C}\boldsymbol{\vartheta}_\mathrm{r}^*}{\boldsymbol{\vartheta}_\mathrm{r}^T\mathbf{D}\boldsymbol{\vartheta}_\mathrm{r}^*+
	\boldsymbol{\vartheta}_\mathrm{t}^T\mathbf{E}\boldsymbol{\vartheta}_\mathrm{t}^* P_\mathrm{j}^{\max}(1-\iota)+\sigma_\mathrm{b}^2},\label{eq_passi_obj_trans} \\
&\mathbf{C}=(\mathbf{H}_{\mathrm{rb}}^*\mathbf{H}_{\mathrm{AR}})\mathbf{w}_\mathrm{b}\mathbf{w}_\mathrm{b}^H(\mathbf{H}_{\mathrm{rb}}^*\mathbf{H}_{\mathrm{AR}})^H,\label{eq_passi_C}\\
&\mathbf{D}=(\mathbf{H}_{\mathrm{rb}}^*\mathbf{H}_{\mathrm{AR}})\mathbf{w}_\mathrm{c}\mathbf{w}_\mathrm{c}^H(\mathbf{H}_{\mathrm{rb}}^*\mathbf{H}_{\mathrm{AR}})^H,\label{eq_passi_D} \\
&\mathbf{E}=(\mathbf{H}_\mathrm{rb}^*\mathbf{H}_{\mathrm{rc}}^*)(\mathbf{H}_\mathrm{rb}^*\mathbf{H}_{\mathrm{rc}}^*)^H,\label{eq_passi_E}
\end{align}
\begin{align}
\tilde{f}(\boldsymbol{\vartheta}_\mathrm{t})=& \left|\boldsymbol{\vartheta}_\mathrm{t}^T\mathbf{H}_{\mathrm{rc}}^*\mathbf{H}_\mathrm{AR}\mathbf{w}_\mathrm{c}\right|^2-(2^{R^*}-1)\times,\notag\\
&\left(\left|\boldsymbol{\vartheta}_\mathrm{t}^T\mathbf{H}_{\mathrm{rc}}^*\mathbf{H}_\mathrm{AR}\mathbf{w}_\mathrm{b}\right|^2+\sigma^*+\sigma_\mathrm{c}^2\right).\label{eq_eq_passi_2}
\end{align}
We can find that the fractional objective function and the constraints \eqref{eq_passi_1}, \eqref{eq_passi_2} in problem \eqref{eq_passi} are non-convex w.r.t. $\boldsymbol{\vartheta}_\mathrm{r}$ and $\boldsymbol{\vartheta}_\mathrm{t}$ recalling that $\theta_\mathrm{r}=\operatorname{diag}(\boldsymbol{\Theta}_\mathrm{r})^H\operatorname{diag}(\boldsymbol{\Theta}_\mathrm{r})=\boldsymbol{\vartheta}_\mathrm{r}^H\boldsymbol{\vartheta}_\mathrm{r}$ and $\lambda_\mathrm{rw}=\left\|\boldsymbol{\Theta}_\mathrm{t}\mathbf{h}_{\mathrm{rc}}^*\right\|^2_2=\left\|\boldsymbol{\vartheta}_\mathrm{t}\circ\mathbf{h}_{\mathrm{rc}}^*\right\|^2_2$, which makes this  problem difficult to be solved directly.

In order to effectively solve the optimization problem \eqref{eq_passi}, we first adopt the Dinkelbach's algorithm \cite[Chapter3.2.1]{zap15} and the SDR techniques to deal with the objective function. In the $(i+1)$-th iteration of the Dinkelbach's algorithm, the objective function can be transformed as
\vspace{-2mm}
\begin{equation}\label{eq_Dink_obj_trans}
\begin{aligned}[b]	\widehat{\gamma}_\mathrm{bb}^{(i+1)}(\mathbf{Q}_\mathrm{r},\mathbf{Q}_\mathrm{t})=&\operatorname{Tr}(\mathbf{C}\mathbf{Q}_\mathrm{r})-\chi^{(i)}\Big(\operatorname{Tr}(\mathbf{D}\mathbf{Q}_\mathrm{r})+\\
	&~ \operatorname{Tr}(\mathbf{E}\mathbf{Q}_\mathrm{t})P_\mathrm{j}^{\max}(1-\iota)+\sigma_\mathrm{b}^2\Big),
\end{aligned}
\end{equation}
where $\mathbf{Q}_\mathrm{r}=\boldsymbol{\vartheta}_\mathrm{r}^*\boldsymbol{\vartheta}_\mathrm{r}^T$ and $\mathbf{Q}_\mathrm{t}=\boldsymbol{\vartheta}_\mathrm{t}^*\boldsymbol{\vartheta}_\mathrm{t}^T$ will be optimized to obtain the reflection and transmission coefficients of STAR-RIS.
In addition, the parameter $\chi^{(i)}$ is updated with
\begin{align}\label{eq_Dink_coeff}
\chi^{(i)}=\frac{\operatorname{Tr}(\mathbf{C}\mathbf{Q}_\mathrm{r}^{(i)})}{\left(\operatorname{Tr}(\mathbf{D}\mathbf{Q}_\mathrm{r}^{(i)})
+\operatorname{Tr}(\mathbf{E}\mathbf{Q}_\mathrm{t}^{(i)})P_\mathrm{j}^{\max}(1-\iota)+\sigma_\mathrm{b}^2\right)}, \end{align}
where $\mathbf{Q}_\mathrm{r}^{(i)}$ and $\mathbf{Q}_\mathrm{t}^{(i)}$ are the optimized solution through the $i$-th iteration. In this way,  the objective function has been transformed into an affine function w.r.t. $\mathbf{Q}_\mathrm{r}$ and $\mathbf{Q}_\mathrm{t}$.

Next, we try to deal with the constraints in problem \eqref{eq_passi}. For the left-side of covert communication constraint \eqref{eq_passi_1}, we find that it is a monotonically decreasing function of $\frac{\lambda_\mathrm{rw}}{\theta_\mathrm{r}}$. Thus, the constraint \eqref{eq_passi_1} can be equivalently rewritten as
\begin{equation}\label{eq_trans_eq_passi_q}
\begin{aligned}[b]
\frac{\lambda_\mathrm{rw}}{\theta_\mathrm{r}}\geq\varphi(\epsilon),
\end{aligned}
\end{equation}
where $\varphi(\epsilon)$ 
can be numerically obtained by employing the bisection search method.  
According to the expressions of $\lambda_\mathrm{rw}$ and $\theta_\mathrm{r}$, they can be reformulated as
\begin{equation}\label{eq_reform_theta_lambda_rw}
\begin{aligned}[b]
\lambda_\mathrm{rw}=\boldsymbol{\beta}_\mathrm{t}^T(\mathbf{h}_\mathrm{rc}\circ\mathbf{h}_{\mathrm{rc}}^*), ~ \theta_\mathrm{r}=\boldsymbol{\beta}_\mathrm{r}^T\mathbf{I}_{N\times1}.
\end{aligned}
\end{equation}
where $\boldsymbol{\beta}_\mathrm{r}=\big[\beta_\mathrm{r}^1,\cdots, \beta_\mathrm{r}^N\big]^T$,
$\boldsymbol{\beta}_\mathrm{t}=\big[\beta_\mathrm{t}^1,\cdots, \beta_\mathrm{t}^N\big]^T$.

Similarly, by using SDR techniques, the equivalent form of $\tilde{f}(\boldsymbol{\vartheta}_\mathrm{t})$ in constraint \eqref{eq_passi_2} w.r.t. $\mathbf{Q}_\mathrm{t}$ can be derived as
\begin{equation}\label{eq_trans_eq_passi_2}
\begin{aligned}[b]
\hspace{-3mm}\tilde{f}(\mathbf{Q}_\mathrm{t})=\operatorname{Tr}(\mathbf{F}\mathbf{Q}_\mathrm{t})-\big(2^{R^*}-1\big)
\left(\operatorname{Tr}(\mathbf{G}\mathbf{Q}_\mathrm{t})+\sigma^*+\sigma_\mathrm{c}^2\right),
\end{aligned}
\end{equation}
where we have
\begin{subequations}\label{eq_passi_FG}
\begin{align}
&\qquad 	\mathbf{F}=(\mathbf{H}_{\mathrm{rc}}^*\mathbf{H}_{\mathrm{AR}})\mathbf{w}_\mathrm{c}\mathbf{w}_\mathrm{c}^H(\mathbf{H}_{\mathrm{rc}}^*\mathbf{H}_{\mathrm{AR}})^H,\label{eq_passi_FG_1}\\
&\qquad\mathbf{G}=(\mathbf{H}_{\mathrm{rc}}^*\mathbf{H}_{\mathrm{AR}})\mathbf{w}_\mathrm{b}\mathbf{w}_\mathrm{b}^H(\mathbf{H}_{\mathrm{rc}}^*\mathbf{H}_{\mathrm{AR}})^H.\label{eq_passi_FG_2}
\end{align}
\end{subequations}

Based on the above analysis, the optimization problem \eqref{eq_passi} in the $(i+1)$-th iteration can be formulated as
\begin{subequations}\label{eq_passi_trans_rank1}
\begin{align}
&\hspace{-5mm}\max _{\mathbf{Q}_\mathrm{r}, \mathbf{Q}_\mathrm{t}, \boldsymbol{\beta}_\mathrm{r}, \boldsymbol{\beta}_\mathrm{t}} \widehat{\gamma}_\mathrm{bb}^{(i+1)}(\mathbf{Q}_\mathrm{r},\mathbf{Q}_\mathrm{t}),\notag \\
&\text { s.t. }~\frac{\boldsymbol{\beta}_\mathrm{t}^T(\mathbf{h}_\mathrm{rc}\circ\mathbf{h}_{\mathrm{rc}}^*)}{\boldsymbol{\beta}_\mathrm{r}^T\mathbf{I}_{N\times1}}\geq \varphi(\epsilon),\label{eq_passi_trans_rank1_1} \\
&\qquad\tilde{f}(\mathbf{Q}_\mathrm{t})\geq0, \label{eq_passi_trans_rank1_2}\\
&\qquad\beta_\mathrm{r}^n+\beta_\mathrm{t}^n=1,\label{eq_passi_trans_rank1_3} \\
&\qquad\operatorname{diag}(\mathbf{Q}_\mathrm{r})=\boldsymbol{\beta}_\mathrm{r}, \operatorname{diag}(\mathbf{Q}_\mathrm{t})=\boldsymbol{\beta}_\mathrm{t},\label{eq_passi_trans_rank1_4}\\
&\qquad\operatorname{rank}(\mathbf{Q}_\mathrm{r})=1, \operatorname{rank}(\mathbf{Q}_\mathrm{t})=1,\label{eq_passi_trans_rank1_5}\\
&\qquad\mathbf{Q}_\mathrm{r}\succeq0, \mathbf{Q}_\mathrm{t}\succeq0.\label{eq_passi_trans_rank1_6}
\end{align}
\end{subequations}

However, problem \eqref{eq_passi_trans_rank1} is still a non-convex optimization problem due to the two rank-one constraints in \eqref{eq_passi_trans_rank1_5}.
Due to the dependence of $\mathbf{Q}_\mathrm{r}$ and  $\mathbf{Q}_\mathrm{t}$, it is difficult to re-construct the rank-one solution if we remove the rank-one constraints directly.  To handle this issue, we equivalently rewrite the rank-one constraints as  \cite{X.HU_TCOM21RIS}
\begin{subequations}\label{eq_trans_rank1}
\begin{align}
\operatorname{rank}(\mathbf{Q}_\mathrm{r})=&1 \Longleftrightarrow \eta_\mathrm{r}\triangleq\operatorname{Tr}(\mathbf{Q}_\mathrm{r})-\|\mathbf{Q}_\mathrm{r}\|_2=0, \label{eq_trans_rank1_1}\\
\operatorname{rank}(\mathbf{Q}_\mathrm{t})=&1 \Longleftrightarrow \eta_\mathrm{t}\triangleq\operatorname{Tr}(\mathbf{Q}_\mathrm{t})-\|\mathbf{Q}_\mathrm{t}\|_2=0,\label{eq_trans_rank1_2}
\end{align}
\end{subequations}
where $\|\mathbf{Q}\|_2$ represents the spectral norm  which is a convex function of $\mathbf{Q}$.
Note that for any positive semidefinite matrix $\mathbf{Q}\succeq0$, the inequality $\operatorname{Tr}(\mathbf{Q})-\|\mathbf{Q}\|_2\geq0$ always holds and the equality is satisfied if and only if $\operatorname{rank}(\mathbf{Q})=1$. Based on the non-negative feature of $\eta_\mathrm{r}$ and $\eta_\mathrm{t}$, we add them  into the objective function of problem \eqref{eq_passi_trans_rank1} as penalty terms for the rank-one constraints. Hence, the optimization problem  can be re-expressed as
\begin{subequations}\label{eq_passi_trans_withoutrank1}
\begin{align}
&\max _{\mathbf{Q}_\mathrm{r}, \mathbf{Q}_\mathrm{t}, \boldsymbol{\beta}_\mathrm{r}, \boldsymbol{\beta}_\mathrm{t}} \widehat{\gamma}_\mathrm{bb}^{(i+1)}(\mathbf{Q}_\mathrm{r},\mathbf{Q}_\mathrm{t})-\rho_1\eta_\mathrm{r}-\rho_2\eta_\mathrm{t},\notag \\
&\qquad\text { s.t. }~ \frac{\boldsymbol{\beta}_\mathrm{t}^T(\mathbf{h}_\mathrm{rc}\circ\mathbf{h}_{\mathrm{rc}}^*)}{\boldsymbol{\beta}_\mathrm{r}^T\mathbf{I}_{N\times1}}\geq \varphi(\epsilon),\label{eq_passi_trans_withoutrank1_1} \\
&\qquad\qquad \eqref{eq_passi_trans_rank1_2}, \eqref{eq_passi_trans_rank1_3},\eqref{eq_passi_trans_rank1_4},\eqref{eq_passi_trans_rank1_6},\label{eq_passi_trans_withoutrank1_3}
\end{align}
\end{subequations}
where $\rho_1$, $\rho_2>$0 are the introduced penalty coefficients. Now, the optimization problem \eqref{eq_passi_trans_withoutrank1} is still non-convex  because of the non-convexity of the penalty terms $\eta_\mathrm{r}$ and $\eta_\mathrm{t}$.
By replacing the convex spectral norms  in $\eta_\mathrm{r}$ and $\eta_\mathrm{t}$ with their linear lower-bound, i.e., first-order Taylor expansions, we can obtain the upper-bound linear approximations for $\eta_\mathrm{r}$ and $\eta_\mathrm{t}$ as 
\begin{align}
\eta_\mathrm{r}\leq &\operatorname{Tr}(\mathbf{Q}_\mathrm{r})-\left(\|\mathbf{Q}_\mathrm{r}^{(i)}\|_2
+\operatorname{Tr}\left(\mathbf{q}_\mathrm{r}^{(i)}(\mathbf{q}_\mathrm{r}^{(i)})^H(\mathbf{Q}_\mathrm{r}-\mathbf{Q}_\mathrm{r}^{(i)})\right)\right) \nonumber\\
=&\widehat{\eta}_\mathrm{r}(\mathbf{Q}_\mathrm{r}),\label{eq_penal_trans_r}\\
\eta_\mathrm{t}\leq &\operatorname{Tr}(\mathbf{Q}_\mathrm{t})-\left(\|\mathbf{Q}_\mathrm{t}^{(i)}\|_2
+\operatorname{Tr}\left(\mathbf{q}_\mathrm{t}^{(i)}(\mathbf{q}_\mathrm{t}^{(i)})^H(\mathbf{Q}_\mathrm{t}-\mathbf{Q}_\mathrm{t}^{(i)})\right)\right) \nonumber\\
=&\widehat{\eta}_\mathrm{t}(\mathbf{Q}_\mathrm{t}),\label{eq_penal_trans_t}
\end{align}
where $\mathbf{q}_\mathrm{r}^{(i)}$ and $\mathbf{q}_\mathrm{t}^{(i)}$ are the eigenvectors corresponding to the largest eigenvalues of $\mathbf{Q}_\mathrm{r}^{(i)}$ and $\mathbf{Q}_\mathrm{t}^{(i)}$. 
Hence, $\widehat{\gamma}_\mathrm{bb}^{(i+1)}(\mathbf{Q}_\mathrm{r},\mathbf{Q}_\mathrm{t})-\rho_1\widehat{\eta}_\mathrm{r}(\mathbf{Q}_\mathrm{r})
-\rho_2\widehat{\eta}_\mathrm{t}(\mathbf{Q}_\mathrm{t})$ is a linear lower-bound of the objective function in problem \eqref{eq_passi_trans_withoutrank1}, which will be further utilized as the objective function in  problem \eqref{eq_passi_trans_withoutrank1} to obtain the solution of $\mathbf{Q}_\mathrm{r}$ and $\mathbf{Q}_\mathrm{t}$ in the $(i+1)$-th iteration, denoted $\mathbf{Q}_\mathrm{r}^{(i+1)}$ and $\mathbf{Q}_\mathrm{t}^{(i+1)}$.

In conclusion, to solve the optimization problem \eqref{eq_passi} for joint passive beamforming design, we propose a two-tier iterative algorithm as summarized in Algorithm 1, 
where the outer loop is for updating the penalty coefficients $\rho_1$ and $\rho_2$ through the penalty method \cite{wright99} and the inner loop is for updating $\mathbf{Q}_\mathrm{r}$ and $\mathbf{Q}_\mathrm{t}$ through the Dinkelbach's algorithm \cite{zap15,J_X.Hu19Edge}. 
Here, $\omega>1$ is the scaling factor of the penalty coefficient. Also, $v_1>0$ denotes the penalty violation and  $v_2>0$ indicates the  gap of the objective functions between two adjacent iterations of the Dinkelbach's algorithm, which are given by
\begin{align}
v_1=&\max\{ \eta_\mathrm{r},\eta_\mathrm{t}\},\label{eq_violation_1}\\	v_2=&\left|\widehat{\gamma}_\mathrm{bb}^{(i+1)}(\mathbf{Q}_\mathrm{r}^{(i+1)},\mathbf{Q}_\mathrm{t}^{(i+1)})-\widehat{\gamma}_\mathrm{bb}^{(i)}(\mathbf{Q}_\mathrm{r}^{(i)},\mathbf{Q}_\mathrm{t}^{(i)})\right|.\label{eq_violation_2}
\end{align}
\begin{center}
\begin{tabular}{p{8.5cm}}
\toprule[2pt]
\textbf{Algorithm 1:} Proposed Iterative Algorithm for Problem \eqref{eq_passi} on Joint Passive Beamforming Design of STAR-RIS  \\ 
\midrule[1pt]
1: Initialize feasible point ($\mathbf{Q}_\mathrm{r}^{(0)}$, $\mathbf{Q}_\mathrm{t}^{(0)}$), penalty coefficients \\
~~($\rho_1^{(0)}$, $\rho_2^{(0)}$), and calculate $v_1$; Define the accuracy tolerance\\
~~thresholds $\varepsilon_1$, $\varepsilon_2$; Set iteration index $l=0$ for outer loop.\\
2: \textbf{While} $v_1>\varepsilon_1$  or $l=0$ \textbf{do}                       \\
3: \quad Initialize $\chi^{(0)}$ and set $i=0$ for  inner loop.\\
4: \quad\textbf{While} $v_2>\varepsilon_2 $ or $i=0$ \textbf{do}\\
5: \qquad Solve the problem \eqref{eq_passi_trans_withoutrank1} with given ($\mathbf{Q}_\mathrm{r}^{(i)}$, $\mathbf{Q}_\mathrm{t}^{(i)}$).\\
6: \qquad Update the ($\mathbf{Q}_\mathrm{r}^{(i+1)}$, $\mathbf{Q}_\mathrm{t}^{(i+1)}$) with  obtained solution.\\
7:  \qquad Calculate $v_2$, $\chi^{(i+1)}$ based on the obtained solution,\\ \quad\qquad and let $i=i+1$.\\
8: \quad\textbf{end while}\\
9: \quad Calculate $v_1$; Update $\rho_1^{(l+1)}=\omega\rho_1^{(l)}$, $\rho_2^{(l+1)}=\omega\rho_2^{(l)}$;\\
~~~~~~Let ($\mathbf{Q}_\mathrm{r}^{(0)}$, $\mathbf{Q}_\mathrm{t}^{(0)}$)$=$($\mathbf{Q}_\mathrm{r}^{(i)}$, $\mathbf{Q}_\mathrm{t}^{(i)}$) and $l=l+1$.\\
10: \textbf{end while}      \\
11: Calculate the $\boldsymbol{\Theta}_\mathrm{r}$, $\boldsymbol{\Theta}_\mathrm{t}$ through the obtained $\mathbf{Q}_\mathrm{r}$, $\mathbf{Q}_\mathrm{t}$.      \\
\bottomrule[2pt]
\end{tabular}
\end{center}

\subsection{Proposed Optimization Algorithm \& Analysis on Complexity and Convergence}\label{sec:S4_P4}
Algorithm 2 concludes the overall processes for solving the original optimization problem \eqref{eq_for_opt} for STAR-RIS-assisted covert communications, which is an alternating optimization algorithm to solve three subproblems alternatively as detailed in Section \ref{sec:S4}. Here, $v$$>$$0$ represents the gap of objective values between two adjacent iterations, 
and the algorithm converges  when $v$ is below a predefined accuracy threshold $\varepsilon$.

Next, we give the analysis on the computational complexity of the proposed Algorithm 2. Specifically, the main complexity comes from solving the standard SDP subproblems\footnote{For convex problems,  we assume that the Interior point method is adopted and then calculate the computational complexity accordingly \cite{boyd04}.}.
In terms of active beamforming design for Alice, the computational complexity on solving the subproblems \eqref{eq_trans_acti_wb} and \eqref{eq_activ_wc_trans} can be respectively calculated as $\mathcal{O}(M^{3.5})$ and $\mathcal{O}(M^{3.5})$. For Algorithm 1 on joint passive beamformeing design of STAR-RIS, the main complexity lies on solving the subproblem \eqref{eq_passi_trans_withoutrank1} which is dominated by $\mathcal{O}(2N^{3.5})$.
Besides, we use the bisection search method to find $\varphi(\epsilon)$  in \eqref{eq_trans_eq_passi_q} with the complexity of $\mathcal{O}\big(\operatorname{\log_2}(\frac{s_0}{\epsilon_\mathrm{t}})\big)$ where $s_0$ and $\epsilon_\mathrm{t}$  denote the length of the initial search interval and the accuracy tolerance, respectively. 
Hence, the computational complexity of Algorithm 1 
is $\mathcal{O}\big(I_2I_3(2N^{3.5})+\operatorname{\log_2}(\frac{s_0}{\epsilon_\mathrm{t}})\big)$, where $I_2$ and $I_3$ are the number of outer and inner iterations.
Based on the above analysis, the computational complexity of the proposed alternating Algorithm 2 for solving the original covert communication problem \eqref{eq_for_opt} is dominated by $\mathcal{O}\big(I_1\big(2M^{3.5}+\operatorname{\log_2}(\frac{s_0}{\epsilon_\mathrm{t}})+I_2I_3(2N^{3.5})\big)\big)$, where $I_1$ denotes the total iteration number of the proposed algorithm. The complexity is mainly determined by the number of antennas at Alice ($M$) and elements at STAR-RIS ($N$).

It is easy to verify that the convergence of the  alternating Algorithm 2. For each iteration of Algorithm 2, we can always find a solution not worse than that of the previous iteration, considering the SDR and the Dinkelbach's algorithm leveraged for solving the subproblems. Hence, the the objective function of problem \eqref{eq_for_opt} monotonically non-decreases w.r.t. the iteration index, and the algorithm finally converges subject to the transmit power limitation \eqref{eq_for_opt_1}, which will be further verified by simulation results.
\begin{center}
\begin{tabular}{p{8.5cm}}
\toprule[2pt]
\textbf{Algorithm 2:} Proposed Alternating Algorithm for STAR-RIS-assisted Covert Communications Problem \eqref{eq_for_opt} \\
\midrule[1pt]
1: Initialize feasible point ($\mathbf{w}_\mathrm{b}^{(0)}$, $\mathbf{w}_\mathrm{c}^{(0)}$, $\boldsymbol{\Theta}_\mathrm{r}^{(0)}$, $\boldsymbol{\Theta}_\mathrm{t}^{(0)}$); Define \\ \quad the tolerance accuracy  $\varepsilon$; Set iteration index $m=0$.\\
2: \textbf{While} $v>\varepsilon$ or $m=0$ \textbf{do}                       \\
3:\quad Solve the relaxed version of the subproblem \eqref{eq_trans_acti_wb} with \\
~~~~~SDR method and use Gaussian randomization method\\
~~~~~to construct the rank-one solution, then update the $\mathbf{w}_\mathrm{b}$. \\
4:\quad Similarly,  solve the relaxed version of the subproblem\\
\qquad \eqref{eq_activ_wc_trans} with SDR method and update the $\mathbf{w}_\mathrm{c}$. \\
5:\quad Solve subproblem \eqref{eq_passi} with Algorithm 1 and update\\
~~~~~the $\boldsymbol{\Theta}_\mathrm{r}$ and $\boldsymbol{\Theta}_\mathrm{t}$.\\
6:\quad Calculate the objective value $R_{\mathrm{bb}}^{(m+1)}$ value and update\\
~~~~~$v=\big|R_{\mathrm{bb}}^{(m+1)}-R_{\mathrm{bb}}^{(m)}\big|^2$; Let $m=m+1$.\\
7: \textbf{end while}      \\
\bottomrule[2pt]
\end{tabular}
\end{center}

\section{Simulation Results}\label{sec:S5}
In this section, we show the numerical simulation results to verify the effectiveness of the proposed STAR-RIS-assisted covert communication scheme implemented by the proposed optimization Algorithm 2. 
Specifically, all the simulation results are averaged over 1000 independent channels realizations.
For large-scale path loss coefficients, we assume $\rho_0=-20$dB, $\alpha=2.6$ and the distances are set as $d_\mathrm{AR}=500 m$, $d_\mathrm{rb}=100 m$, $d_\mathrm{rw}=80 m$ and $d_\mathrm{rc}=150 m$.
Furthermore, we define the noise power $\sigma_\mathrm{b}^2=-140$ dBm, $\sigma_\mathrm{c}^2=-140$ dBm and the self-interference coefficient $\phi=-160$ dB\cite{bharadia13}.
In the proposed algorithm, the accuracy tolerance parameters $\varepsilon$, $\varepsilon_1$ and $\varepsilon_2$ are set as $10^{-4}$,  $10^{-8}$ and  $10^{-8}$, respectively.

To highlight the advantage of covert communication aided by STAR-RIS, we consider a baseline scheme which employs two adjacent conventional RISs to replace STAR-RIS where one is the reflection-only RIS and the other one is transmission-only RIS. The number of elements in these two RISs is selected as $N/2$ so as to achieve a fair comparison. We call this baseline scheme as "RIS-aided scheme".
In addition, to further validate the effectiveness of the proposed algorithm, a general optimization method called GCMMA is adopted as a comparison algorithm to solve the problem \cite{svanberg07, svanberg02}, which requires lower computational complexity and is capable of converging to the Karush-Kuhn-Tucker (KKT) solution. 
%

\begin{figure}[ht]
\centering
\includegraphics[scale=0.19]{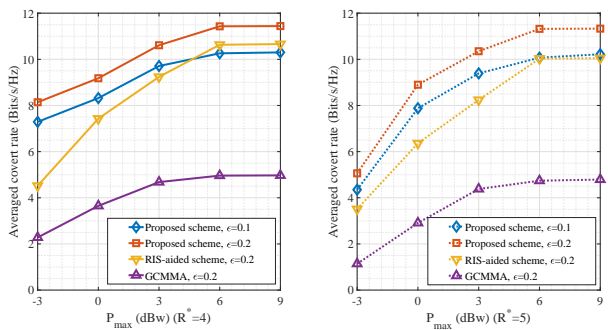}\\
\caption{Average covert rate versus the maximum transmit power $P_{\max}$ of Alice with $P_\mathrm{j}^{\max}=0$ dBw, $M=3$, $N=30$, $\iota=0.1$, $\kappa=0.1$, and different $\epsilon$ and $R^*$ .}\label{fig:Pvsrate}
\end{figure}
In Fig. \ref{fig:Pvsrate}, we show the performance of average covert rate versus the transmitted power $P_{\mathrm{max}}$ at Alice, considering different QoS ($R^*$) and covert ($\epsilon$) requirements.
Specifically,  we can find that the achievable covert rates for all schemes in all scenarios 
gradually increase with the growth of $P_{\max}$ before it reaching 6 dBw. And then the covert rates approach saturation when $P_{\max}$ further increases due to limitations of system settings. 
It is obvious that a significant performance improvement can be achieved by the proposed optimization scheme in comparison with the GCMMA method, which clearly validates the effectiveness of the proposed algorithm.
Compared with the RIS-aided baseline scheme, we can find that the proposed STAR-RIS-assisted scheme possesses a strong superiority in enhancing the system covet performance, and the advantage may further expanded when with relaxed QoS requirement but limited transmit power budget (smaller $R^*$ and $P_{\max}$). 
In addition, we can observe that a lower covert rate is achieved if the QoS or the covertness constraint becomes tighter, i.e., from $R^*=4$ to $R^*=5$, or from $\epsilon^*=0.2$ to $\epsilon^*=0.1$, which coincides with our intuition. Compared with the RIS-aided scheme, the performance degradation of proposed STAR-RIS assisted scheme with a moderate $P_{\max}$ is much less serious.

\begin{figure}[ht]
\centering
\includegraphics[scale=0.19]{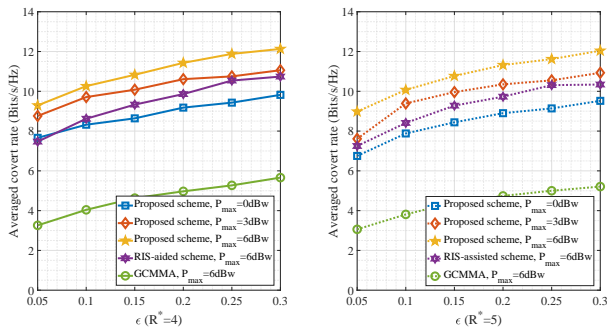}\\
\caption{Average covert rate versus the covert requirement $\epsilon$ with $P_\mathrm{j}^{\max}=0$ dBw, $M=3$, $N=30$, $\iota=0.1$, $\kappa=0.1$, and different $P_{\max}$ and $R^*$ .}\label{fig:epsilovsrate}
\end{figure}
Next in Fig. \ref{fig:epsilovsrate}, we investigate the influence of the covert requirement, i.e., $\epsilon$, on the performance of average covert rate, considering different $P_{\max}$ and QoS requirements. In particular, $P_{\max}=6$ dBw is selected to operate the RIS-aided baseline scheme for an evident comparison, and obvious performance improvement can be achieved by the proposed scheme. Even if a lower transmitted power budget, i.e., 3 dBw, is utilized, the proposed scheme can still obtain better performance. This is because the STAR-RIS possesses a more flexible regulation ability compared with the conventional RIS, which can adjust the element phases and amplitudes for both reflection and transmission. It can be seen that the proposed scheme highly outperforms the GCMMA algorithm and the performance gap enlarges with the increase of $\epsilon$, indicating that the proposed scheme can achieve a much better solution than the KKT solution converged by the GAMMA method. 
\begin{figure}[ht]
\centering
\includegraphics[scale=0.19]{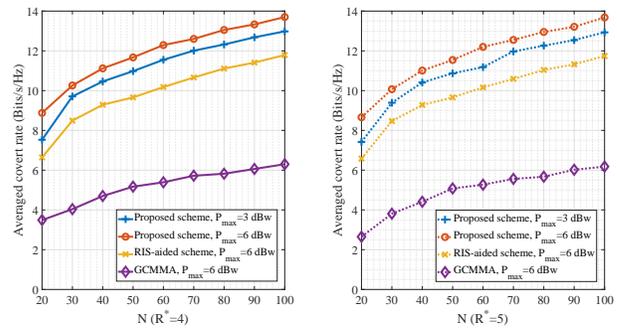}\\
\caption{Average covert rate versus the number of elements in STAR-RIS with $P_\mathrm{j}^{\max}=0$ dBw, $M=3$, $\epsilon=0.1$, $\iota=0.1$, $\kappa=0.1$, and different $P_{\max}$ and $R^*$ .}\label{fig:riselementvsrate}
\end{figure}

We present the variation curves of average covert rate w.r.t. the number of elements on STAR-RIS ($N$) in Fig. \ref{fig:riselementvsrate}, under different transmit power $P_{\max}$ and QoS constraints $R^*$. It can be observed that the average covert rates of all the schemes grow with $N$, since the increased elements can provide higher freedom degree for reconfiguration of propagation environment. However, the increasing rates gradually decrease with the growth of $N$, and this may due to the limitations of other system settings. Similarly, $P_{\max}=6$ dBw is chosen to implement the two benchmark schemes, i.e., the RIS-aided and the GCMMA schemes. 
The obtained results further verify the advantages of the proposed STAR-RIS-assisted scheme which can achieve even better performance than the benchmark schemes in the scenario with a much smaller transmit power budget ($P_{\max}=3$ dBw). In addition, the performance enhancement for the proposed scheme is more obvious as the number of elements becomes larger. 
\begin{figure}[ht]
\centering
\includegraphics[scale=0.37]{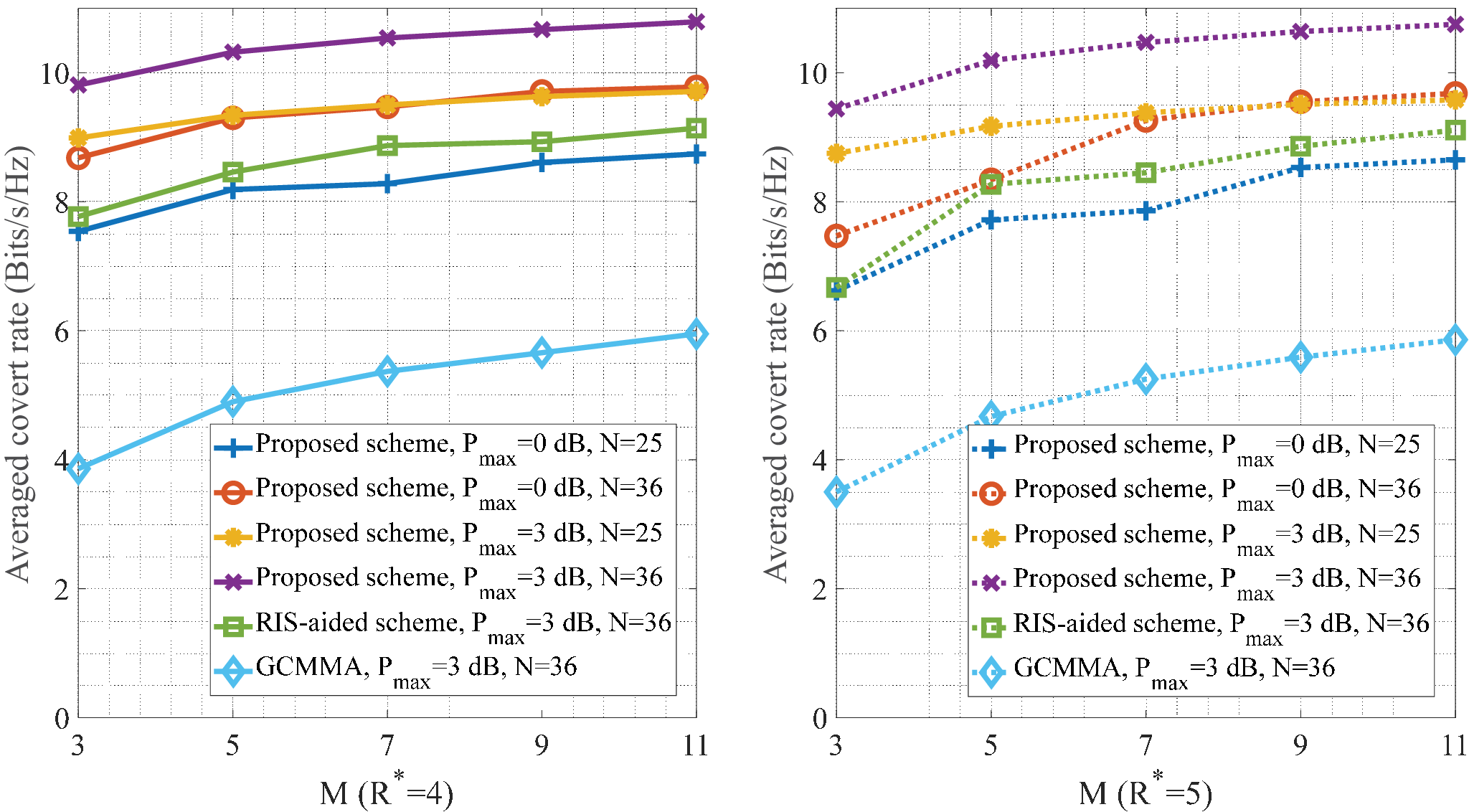}\\
\caption{Averaged covert rate versus the number of elements equipped at Alice with $P_\mathrm{j}^{\max}=0$ dBw, $\epsilon=0.1$, $\iota=0.1$, $\kappa=0.1$, and different $P_{\max}$, $N$ and $R^*$ .}\label{fig:antennavsrate}
\end{figure}

In Fig. \ref{fig:antennavsrate}, we explore the effects of the number of antennas equipped at the transmitter, i.e., $M$, on the available covert rate, considering different transmitted power $P_{\mathrm{max}}$, number of elements in STAR-RIS ($N$), and QoS requirements ($R^*$). In particular, with the growth of $M$, the average covert rates of all schemes gradually increase, but the increasing rates have downward trends. Besides, we can observe that higher transmitted power or more elements at STAR-RIS contribute to breaking the performance bottleneck imposed by channel characteristics and the number of antennas at Alice. Similarly, evident performance gaps exist between the proposed scheme and the RIS-aided scheme even for the proposed scheme with a lower $P_{\max}$ or fewer $N$. Compare with the GCMMA method, the proposed algorithm can still achieve better solutions in both scenarios with $R^*=4$ and $R^*=5$.
\begin{figure}[ht]
\centering
\includegraphics[scale=0.19]{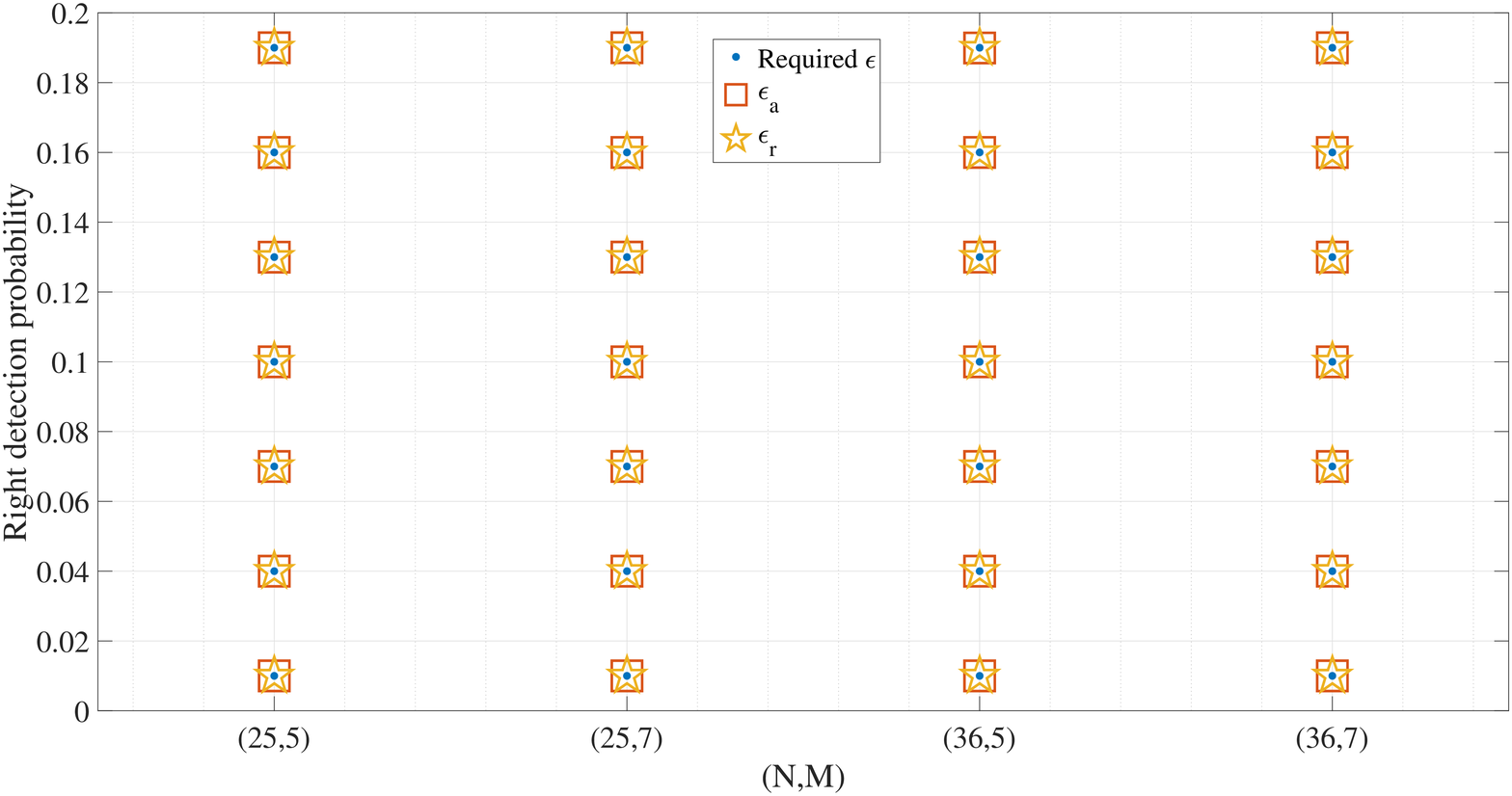}\\
\caption{Verifying the reasonability of choosing the lower bound of the minimum average DEP i.e., $\hat{P}_\mathrm{ea}^*$, with $P_\mathrm{j}^{\max}=0$ dBw, $P_{\max}=3$ dBw, $\iota=0.1$, $\kappa=0.1$, and different $N$ and $M$ .}\label{fig:verify}
\end{figure}

Fig. \ref{fig:verify} verifies the reasonability of adopting the lower bound of the minimum average DEP, i.e., $\hat{P}_\mathrm{ea}^*$,  to replace itself $\overline{P}_\mathrm{ea}^*$, under different covertness requirements ($\epsilon$), number of antennas at Alice ($M$), and elements equipped at RIS ($N$). In particular, we utilize the solution obtained by choosing the lower bound $\hat{P}_\mathrm{ea}^*$  to compute the accurate maximum average correct detection probability at Willie, which denotes $\epsilon_\mathrm{r}$. Note that $\epsilon_\mathrm{a}=1-\hat{\mathrm{P}}_\mathrm{ea}^*$ represents the upper bound of $\epsilon_\mathrm{r}$. According to the obtained results, we can find that the lower bound in \eqref{eq_low_avera_asy_DEP} is tight because $\epsilon_\mathrm{r}$ and $\epsilon_\mathrm{a}$ are almost identical in all considered scenarios with  different covert requirements and different communication system configurations.

\begin{figure*}[b]
\hrulefill
\begin{equation}\label{eq_PFA_detailed_derivation}
\begin{aligned}[b]
	P_{\mathrm{FA}}=&\operatorname{Pr}\left(\overline{P}_\mathrm{w}>\tau_\mathrm{d t} \mid \mathcal{H}_0\right)\\=&\begin{cases}
		1, & \tau_\mathrm{d t}<\sigma_\mathrm{w}^2 ,\\
		\int_0^{\infty} \frac{e^{-\frac{ x}{\lambda}}}{\lambda} d x-\int_0^{\tau_\mathrm{d t}-\sigma_\mathrm{w}^2} \int_0^{\tau_\mathrm{d t}-\sigma_\mathrm{w}^2-x} \frac{e^{-\frac{ x}{\lambda}}}{\lambda}  \frac{1}{\gamma P_\mathrm{j}^{\max}}d y d x,& \sigma_\mathrm{w}^2 \leq \tau_\mathrm{d t}<\sigma_\mathrm{w}^2+\gamma P_\mathrm{j}^{\max}, \\
		\int_0^{\gamma P_\mathrm{j}^{\max}} \int_{\tau_\mathrm{d t}-\sigma_\mathrm{w}^2-y}^{\infty}  \frac{e^{-\frac{ x}{\lambda}}}{\lambda}  \frac{1}{\gamma P_\mathrm{j}^{\max}} dx d y, & \tau_\mathrm{d t}\geq\gamma P_\mathrm{j}^{\max} +\sigma_\mathrm{w}^2,
	\end{cases} \\
	=&\begin{cases}
		1 ,& \tau_\mathrm{d t}<\sigma_\mathrm{w}^2 ,\\
		1-\frac{\left(\tau_\mathrm{d t}-\sigma_\mathrm{w}^2\right)+\lambda e^{-\frac{\tau_\mathrm{d t}-\sigma_\mathrm{w}^2}{\lambda}}-\lambda}{\gamma P_\mathrm{j}^{\max} }, & \sigma_\mathrm{w}^2 \leq \tau_\mathrm{d t}<\sigma_\mathrm{w}^2+\gamma P_\mathrm{j}^{\max}, \\				\frac{\mathrm{e}^{-\frac{\tau_\mathrm{d t}-\sigma_\mathrm{w}^2}{\lambda}}\left(e^{\frac{\gamma P_\mathrm{j}^{\max}}{\lambda}}-1\right) \lambda}{\gamma P_\mathrm{j}^{\max}}, &\tau_\mathrm{d t}\geq\gamma P_\mathrm{j}^{\max}+\sigma_\mathrm{w}^2,
	\end{cases}
\end{aligned}
\end{equation}
\end{figure*}

\vspace{-2mm}
\section{Conclusions}\label{sec:S6}
In this work, we initially investigate the application potentials of STAR-RIS in covert communications. In particular, the closed-form expression of the minimum DEP about the STAR-RIS-aided covert communication system is analytically derived. And then we jointly design the active and passive beamforming at the BS and STAR-RIS, to maximize the covert rate taking into account of the minimum DEP of Willie and the communication outage probability experienced at Bob and Carol.
Due to the strong coupling between active and passive beamforming variables and the characteristic amplitude constraint introduced by the STAR-RIS, the proposed optimization problem is a non-convex problem. To effectively solve this covert communication problem, we elaborately design an alternating algorithm based on the SDR method and Dinkelbach's algorithm. Simulation results demonstrate that the STAR-RIS-assisted covert communication scheme highly outperforms the conventional RIS-aided scheme. In addition, the proposed iterative algorithm can effectively solve the formulated problem with guaranteed convergence.

\appendices
\vspace{-3mm}
\section{Proof of Theorem 1}\label{Apd_A}
To facilitate the analysis, we denote $\varrho_1\triangleq\mathbf{h}_{\mathrm{rw}}^H \boldsymbol{\Theta}_\mathrm{r} \mathbf{H}_{\mathrm{AR}} \mathbf{w}_\mathrm{c}$, and it is easy to demonstrate that $\varrho_1$  follows the complex Gaussian distribution with mean zero and variance $\lambda=\left\|\mathbf{h}_{\mathrm{rw}}^H \boldsymbol{\Theta}_\mathrm{r}\right\|_2^2$ $\mathbf{w}_\mathrm{c}^H \mathbf{w}_\mathrm{c}$. Hence, the PDF of $|\varrho_1|^2$ can be written $f_{|\varrho_1|^2}(x)=\frac{e^{-\frac{ x}{\lambda}}}{\lambda}$. In addition, we know that $P_\mathrm{j}$ follows the uniform distribution, and thus the analytical expression of the FA probability $P_\mathrm{FA}$ can be derived as \eqref{eq_PFA_detailed_derivation}, which is shown at the bottom of this page. Similarly, we can derive the MD probability $P_\mathrm{MD}$ as \eqref{eq_P_MD}.

\vspace{-2mm}
\section{Proof of Theorem 2}\label{Apd_B}
According to the analytical expression of DEP at Willie, i.e., $P_\mathrm{e}$ given in \eqref{eq_DEP_expression}, we can see that $P_\mathrm{e}$ is a segment function based on the detection threshold $\tau_\mathrm{dt}$ in three different {ranges. 
Let us discuss the optimal detection $\tau_\mathrm{dt}^*$ threshold in three ranges, respectively.

\subsubsection{$\tau_\mathrm{d t}<\sigma_\mathrm{w}^2$}
It is easy to note that $P_\mathrm{e}=1$ when $\tau_\mathrm{d t}<\sigma_\mathrm{w}^2$, indicating that Willie is always unable to distinguish the existence of communications between Alice and Bob. Hence, there is no need to optimize  $\tau_\mathrm{dt}$ to minimize the DEP when $\tau_\mathrm{dt}$ falls into this range.

\subsubsection{$\sigma_\mathrm{w}^2\leq\tau_\mathrm{dt}<\sigma_\mathrm{w}^2+\gamma P_\mathrm{j}^{\max}$}
The first-order derivative of $P_\mathrm{e}$ w.r.t. $\tau_\mathrm{dt}$ in this range is given by
\begin{equation}\label{eq_partial_tua_range1}
\frac{\partial P_\mathrm{e}}{\partial \tau_\mathrm{d t}}=\frac{e^{-\frac{\tau_\mathrm{d t}-\sigma_\mathrm{w}^2}{\lambda}}-e^{-\frac{\tau_\mathrm{d t}-\sigma_\mathrm{w}^2}{\tilde{\lambda}}}}{\gamma P_\mathrm{j}^{\max}},
\end{equation}
from which can find that $\frac{\partial P_\mathrm{e}}{\partial \tau_\mathrm{d t}}<0$ always holds. Therefore, $P_\mathrm{e}$ decrease monotonically versus $\tau_\mathrm{d t}\in[\sigma_\mathrm{w}^2, \sigma_\mathrm{w}^2+\gamma P_\mathrm{j}^{\max})$ and the optimal detection threshold is $\tau_\mathrm{d t}^*=\sigma_\mathrm{w}^2+\gamma P_\mathrm{j}^{\max}$.

\subsubsection{$\tau_\mathrm{d t}\geq\sigma_\mathrm{w}^2+\gamma P_\mathrm{j}^{\max}$}
We can further obtain the first-order derivative of $P_\mathrm{e}$ w.r.t. $\tau_\mathrm{dt}$  in this range as
\begin{equation}\label{eq_partial_tua_range2}
\frac{\partial P_\mathrm{e}}{\partial \tau_\mathrm{d t}}=\frac{e^{-\frac{\tau_\mathrm{d t}-\sigma_\mathrm{w}^2}{\tilde{\lambda}}}\left(e^{\frac{\gamma P_\mathrm{j}^{\max}}{\tilde{\lambda}}}-1\right)+e^{-\frac{\tau_\mathrm{d t}-\sigma_\mathrm{w}^2}{\lambda}}\left(1-e^{\frac{\gamma P_\mathrm{j}^{\max}}{\lambda}}\right)}{\gamma P_\mathrm{j}^{\max}}.
\end{equation}

Let $\frac{\partial P_\mathrm{e}}{\partial \tau_\mathrm{d t}}=0$, we can obtain the unique solution of $\tau_\mathrm{d t}$ in this range, i.e.,  $\tau_\mathrm{d t}=\frac{\tilde{\lambda} \lambda}{\tilde{\lambda}-\lambda}\ln\Delta+\sigma_\mathrm{w}^2 \in[\sigma_\mathrm{w}^2+\gamma P_\mathrm{j}^{\max},+\infty)$ where $\Delta=\frac{e^{\frac{\gamma P_\mathrm{j}^{\max} }{\lambda}}-1}{e^{\frac{\gamma P_\mathrm{j}^{\max} }{\tilde{\lambda}}}-1}$.
It is easy to prove that $P_\mathrm{e}$ first decreases and then increases versus  $\tau_\mathrm{d t}$  in this range with $\frac{\tilde{\lambda} \lambda}{\tilde{\lambda}-\lambda} \ln \Delta+\sigma_\mathrm{w}^2$ as the the inflection point. Hence, the optimal detection threshold for minimizing $P_\mathrm{e}$ is given as $\tau_\mathrm{d t}^*=\frac{\tilde{\lambda} \lambda}{\tilde{\lambda}-\lambda} \ln \Delta+\sigma_\mathrm{w}^2$.

Based on the above analysis, the optimal detection threshold $\tau_\mathrm{d t}^*$ can be finally expressed as
\begin{equation}\label{eq_opti_tau}
\tau_\mathrm{d t}^*=\begin{cases}
	\sigma_\mathrm{w}^2+\gamma P_\mathrm{j}^{\max} , & \sigma_\mathrm{w}^2 \leq \tau_\mathrm{d t}<\sigma_\mathrm{w}^2+\gamma P_\mathrm{j}^{\max}, \\
	\frac{\tilde{\lambda} \lambda}{\tilde{\lambda}-\lambda} \ln \Delta+\sigma_\mathrm{w}^2. & \tau_\mathrm{d t}\geq\gamma P_\mathrm{j}^{\max}+\sigma_\mathrm{w}^2.
\end{cases}
\end{equation}

We can verify that $P_\mathrm{e}$ in \eqref{eq_DEP_expression} is a continuous segment function at the segment points  $\sigma_\mathrm{w}^2$ and $\sigma_\mathrm{w}^2+\gamma P_\mathrm{j}^{\max}$. 
Therefore, the optimal detection threshold in the overall defined region for minimizing the DEP $P_\mathrm{e}$ is $\tau_\mathrm{d t}^*=\frac{\tilde{\lambda} \lambda}{\tilde{\lambda}-\lambda} \ln \Delta+\sigma_\mathrm{w}^2$.

\vspace{-2mm}
\section{Proof of Theorem 3}\label{Apd_C}
When the required transmission rate between Alice and Bob is chosen as $R_b$, the communication outage probability at Bob under the randomness of the jamming power $P_\mathrm{j}$ can be calculated as
\begin{equation}\label{eq_outage_detailed_derivation_b}
\begin{aligned}[b]
	\delta_{\mathrm{AB}}=&\operatorname{Pr}\left(C_\mathrm{b}<R_\mathrm{b}\right)\\
	=&\operatorname{Pr}\left(P_\mathrm{j}>\Upsilon\right)\\
	=&\begin{cases}
		1, & \Upsilon<0,\\
		\int_{\Upsilon}^{P_\mathrm{j}^{\max}}\frac{1}{P_\mathrm{j}^{\max}}d y, & 0\leq\Upsilon<P_\mathrm{j}^{\max},\\
		0, &\Upsilon\geq P_\mathrm{j}^{\max},
	\end{cases}\\
	=&\begin{cases}
		1, & \Upsilon<0,\\
		1-\frac{\Upsilon}{P_\mathrm{j}^{\max}}, & 0\leq\Upsilon<P_\mathrm{j}^{\max},\\
		0, &\Upsilon\geq P_\mathrm{j}^{\max},
	\end{cases}
\end{aligned}
\end{equation}
where  $\Upsilon=\frac{\left|\mathbf{h}_{\mathrm{rb}}^H \boldsymbol{\Theta}_\mathrm{r} \mathbf{H}_{\mathrm{AR}} \mathbf{w}_\mathrm{b}\right|^2-\left(2^{R_\mathrm{b}}-1\right)\left(\left|\mathbf{h}_{\mathrm{rb}}^H \boldsymbol{\Theta}_\mathrm{r} \mathbf{H}_{\mathrm{AR}} \mathbf{w}_ \mathrm{c}\right|^2+\sigma_\mathrm{b}^2\right)}{\left(2^{R_\mathrm{b}}-1\right)\left|\mathbf{h}_{\mathrm{rb}}^H \boldsymbol{\Theta}_\mathrm{t} \mathbf{h}_{\mathrm{rc}}^*\right|^2}$.

Similarly, when the required transmission rate between Alice and Carol is chosen as $R_c$, the communication outage probability at Carol under the randomness of the jamming power $P_\mathrm{j}$ and the self-interference channel $h_{\mathrm{cc}}$ is derived as
\begin{equation}\label{eq_outage_detailed_derivation_c}
\begin{aligned}[b]
	\delta_{\mathrm{AC}}=&\operatorname{Pr}\left(C_\mathrm{c}<R_\mathrm{c}\right)\\
	=&\operatorname{Pr}\left(|g_\mathrm{cc}|^2P_\mathrm{j}>\frac{\Gamma}{\phi}\right)\\
	=&\begin{cases}
		\int_{0}^{P_\mathrm{j}^{\max}}\int_{\frac{\Gamma}{\phi y}}^{+\infty}e^{-x}\frac{1}{P_\mathrm{j}^{\max}}d x d y	, & \Gamma\geq0,\\
		1, & \Gamma<0,\\
	\end{cases}\\
	=&\begin{cases}
		e^{-\frac{\Gamma}{\phi P_\mathrm{j}^{\max}}}+\frac{\Gamma}{\phi P_\mathrm{j}^{\max}}\operatorname{Ei}\left(-\frac{\Gamma}{\phi P_\mathrm{j}^{\max}}\right), & \Gamma\geq0,\\
		1, & \Gamma<0,\\
	\end{cases}
\end{aligned}
\end{equation}
where $\Gamma=\frac{\left|\mathbf{h}_{\mathrm{rc}}^H\boldsymbol{\Theta}_\mathrm{t} \mathbf{H}_{\mathrm{AR}} \mathbf{w}_ \mathrm{c}\right|^2-\left(2^{R_\mathrm{c}}-1\right)\left(\left|\mathbf{h}_{\mathrm{rc}}^H\boldsymbol{\Theta}_\mathrm{t} \mathbf{H}_{\mathrm{AR}} \mathbf{w}_ \mathrm{b}\right|^2+\sigma_\mathrm{c}^2\right)}{\left(2^{R_\mathrm{c}}-1\right)}$. 

\ifCLASSOPTIONcaptionsoff 
\newpage
\fi

\bibliographystyle{IEEEtran}
\bibliography{CC}

\end{document}